\newcommand{\Break}{\textbf{break}}
\newcommand{\pkg}[1]{{\normalfont\fontseries{b}\selectfont #1}}
\let\proglang=\textsf
\newcommand*\colvec[3][]{
    \begin{pmatrix}\ifx\relax#1\relax\else#1\\\fi#2\\#3\end{pmatrix}
}
\newcommand*\colvecfive[5][]{
    \begin{pmatrix}\ifx\relax#1\relax\else#1\\\fi#2\\#3 \\#4\\#5\end{pmatrix}
}
\newcommand{\persistence}{{{\cal P}^{\star}_{\cal C}}\xspace}
\theoremstyle{remark}
\newtheoremstyle{myproposition} 
  {3pt}                          
  {3pt}                          
  {\itshape}                     
  {}                             
  {\bfseries}                    
  {.}                            
  { }                            
  {}                             
\theoremstyle{myproposition}
\newtheorem{proposition}{Proposition}
\newtheorem{definition}{Definition}
\theoremstyle{definition}
\numberwithin{table}{section}
\numberwithin{figure}{section}
\journal{}
\begin{document}

\begin{frontmatter}

\title{Null-adjusted persistence function for high-resolution community detection}

\author[unimib]{Alessandro Avellone}
\author[unimi]{Paolo Bartesaghi}
\author[unitn]{Stefano Benati}
\author[unicip]{Christos Charalambous}
\author[unimib]{Rosanna Grassi}

\cortext[cor]{\emph{Corresponding author. email: rosanna.grassi@unimib.it}}

\address[unimib]{University of Milano - Bicocca, Via Bicocca degli Arcimboldi 8, 20126 Milano, Italy}

\address[unimi]{University of Milano, Via Conservatorio 7, 20122 Milano, Italy}

\address[unitn]{University of Trento, Via Verdi 26, 38122 Trento, Italy}

\address[unicip]{University of Cyprus, Department of Economics, PO Box 20537, 1678 Nicosia, Cyprus}

\begin{abstract}
    Modularity and persistence probability are two widely used quality functions for detecting communities in complex networks. In this paper, we introduce a new objective function called null-adjusted persistence, which incorporates features from both modularity and persistence probability, as it implies a comparison of persistence probability with the same null model of modularity. We prove key analytic properties of this new function. We show that the null-adjusted persistence overcomes the limitations of modularity, such as scaling behavior and resolution limits, and the limitation of the persistence probability, which is an increasing function with respect to the cluster size. We propose to find the partition that maximizes the null-adjusted persistence with a variation of the Louvain method and we tested its effectiveness on benchmark and real networks. We found out that maximizing null-adjusted persistence outperforms modularity maximization, as it detects higher resolution partitions in dense and large networks. 
\end{abstract}

\begin{keyword}
Community detection, Modularity, Persistence probability, Null-adjusted persistence

\textbf{JEL Codes}: C02, C38, C61 

\end{keyword}
\end{frontmatter}

\section{Introduction}

Community detection is an essential research area of network science: it seeks to uncover densely connected subgroups of nodes, interpreted as communities  -- also referred to as clusters or modules --  within a given network. 
Loosely speaking, communities should be characterized by multiple links between their members and easy, fast communication between them. In contrast, they should have sparse connections to external nodes and less accessibility to information outside the groups.

Identifying these structures is critical to understanding the underlying organization and functionality of networks in various domains, including social (\cite{2019Zhang, calderoni2017, benati2024}) 
and biological systems (\cite{2019Zhang, calderoni2017, benati2024, 2002Girvan}), and economic and financial applications (\cite{2008Allen, bartesaghi2020, grassi2021}). 
An essential contribution to community detection that shaped the modern study of the field is contained in (\cite{2002Girvan}). To detect communities, the authors proposed an algorithm based on hierarchical partitioning, in which arcs are progressively deleted depending on their edge betweenness. In the following two decades, the contributions to community detection by scientists from the fields of network theory, physics, computer science, operations research, and inferential statistics have been numerous and diversified, as evidenced by the surveys \cite{2010Fortunato, Fortunato2016,2023Li}.

Among others, the following approach has been a recognizable research trend: first, a network statistic is elaborated to discern groups that can be interpreted as communities from the ones that cannot; then, the network community structure is determined through the maximization of the proposed statistic used as objective function, that is, the community structure emerges as the partition that maximizes that statistic. The most widely used community statistic is modularity (\cite{2006Newman}). Modularity evaluates the difference between the actual number of edges within communities and the expected number of such edges in a random network with the same degree distribution (the so-called \textit{configuration model}, see \cite{Newmann2002}).
Modularity has been maximized through exact and heuristic methods. One of the first and most popular heuristics is the Louvain algorithm (\cite{2008Blondel}), in which nodes and communities are progressively merged until a (local) maximal modularity has been found. 
The exact methods maximize the modularity through a clique partitioning problem, formulated as an Integer Linear Programming (ILP) problem and solved with specific techniques, see \cite{Agarwal2008, liberti2010, dinh2015, Zhu2020}. 
However, maximizing modularity and clique partitioning are NP-hard problems. Only instances of moderate size can be solved, and therefore a large effort has been devoted to developing and improving heuristics, to the point that sometimes the purpose of maximizing the modularity has been lost (readers can refer to \cite{Aref2023} for an accurate list of modularity maximization heuristic algorithms).

Nevertheless, modularity suffers of some drawbacks. Specifically, it is biased by the so-called resolution limit, see (\cite{2007Fortunato,brandes2008,lancichinetti2011,LU2020}). Although not apparent from its definition, the size of the network has an impact on the maximization of modularity. It has been proved that, if communities are small enough with respect to the network size, then they are not recognizable as they are merged into larger groups. To amend this problem, some authors proposed some corrections to modularity, such as the modularity density, \cite{Li2008, Costa2015}, or the $z$-modularity, see \cite{Miyauchi2016}. Other authors suggested imposing some linear constraint to the ILP model of modularity maximization to obtain stronger community definitions, \cite{Cafieri2012}. Other authors, see 
\cite{Temprano2024}, suggested applying a measure, the normalized cuts, previously applied to image segmentation, see \cite{shi2000}.

In \cite{Piccardi2011} a new index, called persistence probability, is proposed and described as the probability that a random walker starting in a given cluster  
will move to a node within the same cluster in the next iteration.
As will be seen, persistence probability is defined as the ratio between the internal edges of a cluster and all the edges adjacent to that cluster; therefore, as is the case for modularity, a large number of internal arcs is an important feature of the community definition. However, the two measures are based on two different arguments. To define a community, modularity emphasizes static bonds \textit{within} its members, while persistence emphasizes the role of dynamic communication \textit{between} its members. A community with a high persistence probability reflects a scenario where information spreading across the network remains within a given community, and it is not shared with the rest of the network. Indeed, it has been used to characterize criminal gangs \cite{calderoni2017}, to analyze the world trade web structure \cite{piccardi2012},  to identify cohesive and persistent communities in dynamic social networks (\cite{2014Nguyen}) and in datasets from platforms like Facebook and Twitter (\cite{2020Tulu}). 
Algorithms and subroutines for the persistence probability are not as well-developed and tested as those for the modularity. The preliminary problem of finding \textit{one} community that maximizes a slightly modified version of persistence probability has been discussed \cite{Avellone24}. It has been found that the problem can be formulated as a fractional integer programming problem, and therefore exact and heuristic methods are proposed and applied to real network data.      

In this paper, we deal with the problem of finding the graph partition with maximum persistence probability. In Section \ref{sec2}, we introduce the definition of persistence probability in the context of Markov chains and formulate the maximization problem. Since persistence probability tends to increase with respect to the size of the cluster, we propose a correction to persistence by defining a new function, which we call \textit{null-adjusted persistence}. Drawing inspiration from modularity, it adjusts the persistence probability with a term representing the expected persistence under the configuration model, that is, a null hypothesis that assumes a random graph with no community structure. In Section \ref{sec3}, we present some analytical results. In particular, we highlight the differences between null-adjusted persistence and modularity, and we show that the former is scale-independent and not affected by the resolution limit. Then, we prove a necessary and sufficient condition for merging two distinct clusters and improving the null-adjusted persistence. In Section \ref{sec4}, we develop heuristic and exact algorithms to solve the proposed problem. We then apply these algorithms to test and compare null-adjusted persistence and modularity on synthetic networks and real data. Our results confirm that, in relevant cases, null-adjusted persistence recognizes network structure better than modularity. Finally, in the conclusion, we discuss possible future developments.

\section{Persistence probability by Markov chains}
\label{sec2}
In this section, we introduce the definition of persistence probability in the context of Markov chains.
Let $G=(V, E, W)$ be a weighted, undirected graph (or network), where $V$ is the set of $n$ vertices (or nodes), $E$ is the set of edges (or arcs), and $W$ is the set of edge weights. Let $n = \lvert V \rvert$ be the cardinality of $V$. The subgraph induced by $V' \subseteq V$ is the graph $G_{V'}$ whose vertex set is $V'$ and whose edge set consists of all the edges in $E$ that have both endpoints in ${V'}$. A \textit{clustering} of $G$ is a partition of the network nodes $\Pi=\{ \mathcal{C}_{1}, \dots, \mathcal{C}_{q}\}$ where $\mathcal{C}_{\alpha} \subseteq V$, $\mathcal{C}_{\alpha}\neq \emptyset$ and the induced subgraph $G_{\mathcal{C}_{\alpha}}$ is connected, for all $\alpha=1,\ldots, q$. We denote the set of all possible clusterings of a graph $G$ with $\wp(G)$. 

The information about the weights assigned to the edges is contained in the $n$-square matrix ${\bf W}=\left[ w_{ij} \right]$, where $w_{ij} \ge 0$ is the weight of the edge between nodes $i$ and $j$. The underlying graph, obtained neglecting the arcs weights, is described by the adjacency matrix ${\bf A}=\left[ a_{ij} \right]$, where $a_{ij}=1$ if $w_{ij}>0$ and $a_{ij}=0$ otherwise. The strength and the degree of a node $i$ are defined, respectively, by $s_{i}=\sum_{j=1}^{n}{w_{ij}}$ and $k_{i}=\sum_{j=1}^{n}{a_{ij}}$. We denote by $\mathbf{s}$ and $\mathbf{k}$ the vectors of the strengths and the degrees, respectively, and by $S=\sum_{i<j}{w_{ij}}$ the total strength of the network. 

Since persistence probability is based on random walks on a graph, we assume that the graph $G$ is connected. We can place a homogeneous discrete-time Markov chain over the network, whose space of states coincides with the set of nodes $V$. Specifically, a discrete-time Markov chain on the network is any stochastic process represented by a sequence of $n-$state vectors $\pi(t)=\left(\pi_{1}(t), \dots ,\pi_{n}(t)\right)$ such that the probability of being in any state at a given step $t$ depends only on the state at the previous step $t-1$. The process is thus described by the equation $\pi(t+1)=\pi(t){\bf P}$, where ${\bf P}=\left[ p_{ij} \right]$ is an $n$-squared matrix and $0\leq p_{ij}\leq 1$ is the conditional probability that a random walker jumps from node $i$ to node $j$ at each step, assuming it is in $i$. ${\bf P}$ is a row-stochastic matrix called transition probability matrix.
The simplest choice of transition matrix is associated with the \textit{natural} Markov chains. It assumes that the probability of transition from a node $i$ is uniformly distributed over its neighbors, while it is zero for nodes not directly connected - i.e. not adjacent - to $i$, so that $p_{ij}=\frac{w_{ij}}{s_{i}}$. 
Notice that $\mathbf{s}$ is entrywise positive, being $G$ connected.
The transition matrix can then be expressed as  ${\bf P}=({\rm diag}\, \mathbf{s})^{-1}{\bf W}$, being  ${\rm diag}\, \mathbf{s}$ the diagonal matrix whose diagonal entries are the elements of the vector $\mathbf{s}$. 
A discrete-time Markov chain is homogeneous if the one-step transition probabilities are invariant with respect to time. This implies that the $k$-step transition probabilities can be computed through the $k$-th power ${\bf P}^k$ of the transition matrix.

We also assume that the network $G$ is non-bipartite, hence the Markov chain over $G$ is ergodic, i.e it is always possible to go from any state to any other state and the chain does not exhibit periodic behavior. As a consequence, matrix ${\bf P}$ is irreducible, and there exists a unique stationary state solution of the eigenvalue equation $\pi({\infty}) = \pi({\infty}) {\bf P}$, of the form $\pi_{i}({\infty})=\frac{s_{i}}{\sum_{i=1}^{N} s_{i}}=\frac{s_{i}}{2S}$. Moreover, the sequence of matrices ${\bf P}^k$ converges to a rank-one matrix, that is ${\bf P}^{\infty}= \lim_{k\to \infty} {\bf P}^{k}$, and $\pi({\infty}) = \pi(0) {\bf P}^{\infty}$, where ${\bf P}^{\infty}$ is the \textit{infinite-time} transition matrix containing the transition probabilities of jumping from node $i$ to node $j$ in infinite number of steps.
The stationary probability flux $\phi_{ij}$ along the edge $(i,j)$ 
is finally defined as the actual probability that the walker jumps from $i$ to $j$ in the stationary state, that is $\phi_{ij}=\pi_{i}({\infty})p_{ij}$.

Now, let ${\mathscr{C}}$ be a clustering of $G$ and let ${\mathcal C}_{\alpha}$ and ${\mathcal{C}}_{\beta}$ two distinct elements in ${\mathscr C}$. Let us consider the global stationary probability flux from ${\mathcal{C}}_{\alpha}$ to ${\mathcal{C}}_{\beta}$: $\Phi_{_{{\mathcal{C}}_{\alpha}{\mathcal{C}}_{\beta}}}=\sum_{i\in {\mathcal{C}}_{\alpha}}\sum_{j\in {\mathcal{C}}_{\beta}}\phi_{ij}$. By using this flux between clusters we can induce an aggregated process on a meta-network whose meta-nodes are clusters. This process is known in the literature as lumped Markov chain (\cite{Piccardi2011}). Since the random walker being in two different nodes at the steady state are incompatible events, the probability that it is in the meta-node ${\mathcal{C}}_{\alpha}$ at the steady state is then given by $\pi_{{\mathcal{C}}_{\alpha}}({\infty})=\sum_{i\in {\mathcal{C}}_{\alpha}}\pi_{i}({\infty})$. Therefore, the transition probability from ${\mathcal{C}}_{\alpha}$ to ${\mathcal{C}}_{\beta}$ is given by the conditional probability 
\begin{equation}
\label{LumpedMarkovChain}
p_{_{{\mathcal{C}}_{\alpha}{\mathcal{C}}_{\beta}}}
=
\frac{\Phi_{_{{\mathcal{C}}_{\alpha}{\mathcal{C}}_{\beta}}}}{\pi_{{\mathcal{C}}_{\alpha}}({\infty})}
=
\frac{\sum_{i\in {\mathcal{C}}_{\alpha}}\sum_{j\in {\mathcal{C}}_{\beta}}\pi_{i}({\infty})p_{ij}}{\sum_{i\in {\mathcal{C}}_{\alpha}}\pi_{i}({\infty})}.
\end{equation}
These values represent the entries of the transition probability matrix of the lumped Markov chain. We denote the diagonal element $p_{_{{\mathcal{C}}_{\alpha}{\mathcal{C}}_{\alpha}}}$ of this matrix as ${\cal P}_{_{{\mathcal{C}}_{\alpha}}}$ and we call it the persistence probability of the cluster ${\mathcal{C}}_{\alpha}$ (\cite{Piccardi2011, 2020Patelli}).
In the case of the natural Markov chain on a weighted undirected network, since $p_{ij}=\frac{w_{ij}}{s_{i}}$ and $\pi_{i}({\infty})=\frac{s_{i}}{2S}$, the persistence probability of the generic cluster ${\mathcal{C}}$ is given by
\begin{equation} \label{prs1}
{\cal P}_{\mathcal{C}}=\frac{\sum_{i,j\in {\mathcal{C}}}w_{ij}}{\sum_{i\in {\mathcal{C}}}s_{i}}.
\end{equation}

Hence, the persistence probability for an undirected weighted network is the ratio between the total weight of the edges inside the community ${\mathcal{C}}$ and the total weight of the edges starting from one of the nodes in  ${\mathcal{C}}$ and ending both inside and outside ${\mathcal{C}}$. Finally,
for the an unweighted graph, the persistence probability reduces to ${\cal P}_{\mathcal{C}}=\frac{\sum_{i,j\in {\mathcal{C}}}a_{ij}}{\sum_{i\in {\mathcal{C}}}k_{i}}$.

\subsection{Community detection through maximum persistence probability}

The persistence probability can be used as a measure to determine the cohesiveness of a node subset, that is, to determine whether it can be interpreted as a community.
The first benchmark measure to compare it with is modularity, whose definition is
$Q_{\mathcal{C}} = \sum_{{i,j\in {\mathcal{C}}}} \left( a_{ij} - \frac{k_i k_j}{2m}\right).$
Modularity compares the presence/absence of an edge between two nodes, that is, the term $a_{ij}$, with the probability of its existence under a null hypothesis, called the configuration model. The configuration model is a random graph with the same degree sequence as the original, but whose connections are randomly rewired to destroy any form of endogenous community structure. Note that
$Q_{\mathcal{C}}$ is defined for a single cluster; however, it can be extended as a global measure. For a given partition $\Pi = \{\mathcal{C}_1,\ldots, \mathcal{C}_q \}$, the modularity of $\Pi$ is the (normalized) sum of the modularities of its clusters, and the community structure of a graph is the partition that solves the maximization problem:
\begin{equation}
\label{max_mdl}
\max_{\Pi}\, Q_{\Pi} = \max_{\Pi} \left[\frac{1}{2m}\sum_{{\mathcal{C}_{\alpha}} \in \Pi} Q_{\mathcal{C}_{\alpha}}\right].
\end{equation}
It is clear that persistence probability can play an analogous role to modularity in community detection. As in problem \eqref{max_mdl}, the network communities can be revealed by the partition that maximizes the sum of the persistences. Specifically, given a partition $\Pi$, the global persistence of $\Pi$ is the sum of the persistence probabilities of its clusters:
\begin{equation}
{\cal P}_{\Pi} = \sum_{{\mathcal{C}_{\alpha}} \in \Pi} {\cal P}_{\mathcal{C}_{\alpha}}.
\label{global persistence}
\end{equation}
Then, the community structure of a graph can be revealed by the following maximization problem:
\begin{equation} \label{maxpersistence}
\max_{\Pi}{\cal P}_{\Pi}=\max_{\Pi} \sum_{{\mathcal{C}_{\alpha}} \in \Pi} {\cal P}_{\mathcal{C}_{\alpha}}. 
\end{equation}

Problem \eqref{maxpersistence} can be formulated as a fractional integer programming problem, that can be reduced to mixed integer linear programming (the formulation is reported in the \ref{appendixA}).

The preliminary problem of finding a \textit{single} community that maximizes a slightly modified version of the persistence probability can be found in \cite{Avellone24}.
Exact and heuristic algorithms were given for that problem, and computational tests were carried out on artificial and real data. It was found that, for many test problems, the objective function studied in that paper tends to increase with respect to the size of the cluster, $|\mathcal{C}|$. This global behavior has some troublesome consequences as high-sized clusters turn out to be the best candidate solutions. To amend this bias, the authors suggested plotting the maximum of the objective function as the cluster size $|\mathcal{C}|$ varies, and selecting the local maxima rather than the global maximum, which corresponds to the trivial case $\mathcal{C} = V$.

We now replicate a similar simulation using the persistence probability. Specifically, we refer to the caveman graph shown in Figure \ref{graph_persistence}, panel (a), which represents an instance of the small-world model proposed in \cite{Watts1999}. 
The caveman graph in the figure is composed of five cliques of four nodes each, with each clique connected to two others by a single arc. The black line in the plot in panel (b) represents the persistence probability ${\cal P_C}$ as a function of the cluster size $|\mathcal{C}|$. It can be seen that ${\cal P_C}$ tends to increase. As expected, there is a first local maximum for a cluster of size $|\mathcal{C}|=4$, followed by local maxima for clusters of size multiples of $4$, which are the union of two or more cliques. Remarkably, the value of the local maxima increases with the size of the cluster. This could be problematic if we want to use the persistence probability to assess what the optimal community is. For example, for $|\mathcal{C}|=8$, ${\cal P_C}$ turns out to be larger than for $|\mathcal{C}|=4$, and this is clearly misleading, given the clique structure of the caveman graph.
Actually, one may argue that, since problem \eqref{maxpersistence} does not maximize the persistence of the single cluster but rather the sum of local persistence probabilities, local maxima of more than one clique cannot be optimal. 
The observation has a relevant consequence when one has to devise an efficient heuristic to solve problem \eqref{maxpersistence}. In fact, most heuristic algorithms for clustering find the optimal partition joining the local optima: for example, the Louvain algorithm merges nodes to clusters in a greedy way, until the cluster modularity cannot be improved. However, this strategy is precluded for the persistence probability: the local maxima of the small clusters are almost always smaller than those of the large clusters. To overcome this problem, we then advise the necessity of adjusting the persistence probability in the formulation of the problem \eqref{maxpersistence}. We call this modified objective function \textit{null-adjusted persistence}. We will formally introduce this function in the next section, but  
for illustrative purposes, in Figure \ref{graph_persistence} panel (b), we depict this new function (blue line). It can be seen that the function has a peak exactly where it ought to be, that is for the cluster of size $4$.

\begin{figure}[H]
	\centering
	\subfloat[]{	\begin{tikzpicture}[
			scale=0.65,
			every node/.style={draw, circle, minimum size=0.3cm, inner sep=0pt}
			]
			\node (C11) at (0.00, 4.80) {};
			\node (C12) at (0.80, 4.00) {};
			\node (C13) at (0.00, 3.20) {};
			\node (C14) at (-0.80, 4.00) {};
			
			\node (C21) at (4.57, 1.48) {};
			\node (C22) at (4.05, 0.48) {};
			\node (C23) at (3.04, 0.99) {};
			\node (C24) at (3.56, 2.00) {};
			
			\node (C31) at (2.82, -3.88) {};
			\node (C32) at (1.70, -3.71) {};
			\node (C33) at (1.88, -2.59) {};
			\node (C34) at (3.00, -2.77) {};
			
			\node (C41) at (-2.82, -3.88) {};
			\node (C42) at (-3.00, -2.77) {};
			\node (C43) at (-1.88, -2.59) {};
			\node (C44) at (-1.70, -3.71) {};
			
			\node (C51) at (-4.57, 1.48) {};
			\node (C52) at (-3.56, 2.00) {};
			\node (C53) at (-3.04, 0.99) {};
			\node (C54) at (-4.05, 0.48) {};
			
			\draw (C11) -- (C12); \draw (C11) -- (C13); \draw (C11) -- (C14);
			\draw (C12) -- (C13); \draw (C12) -- (C14); \draw (C13) -- (C14);
			
			\draw (C21) -- (C22); \draw (C21) -- (C23); \draw (C21) -- (C24);
			\draw (C22) -- (C23); \draw (C22) -- (C24); \draw (C23) -- (C24);
			
			\draw (C31) -- (C32); \draw (C31) -- (C33); \draw (C31) -- (C34);
			\draw (C32) -- (C33); \draw (C32) -- (C34); \draw (C33) -- (C34);
			
			\draw (C41) -- (C42); \draw (C41) -- (C43); \draw (C41) -- (C44);
			\draw (C42) -- (C43); \draw (C42) -- (C44); \draw (C43) -- (C44);
			
			\draw (C51) -- (C52); \draw (C51) -- (C53); \draw (C51) -- (C54);
			\draw (C52) -- (C53); \draw (C52) -- (C54); \draw (C53) -- (C54);
			
			\draw (C12) -- (C24);
			\draw (C22) -- (C34);
			\draw (C32) -- (C44);
			\draw (C42) -- (C54);
			\draw (C52) -- (C14);
			\node[rectangle, transparent] at (0, -5.5) {Caveman graph};
		\end{tikzpicture}}
	\subfloat[]{\includegraphics[width=0.50\textwidth]{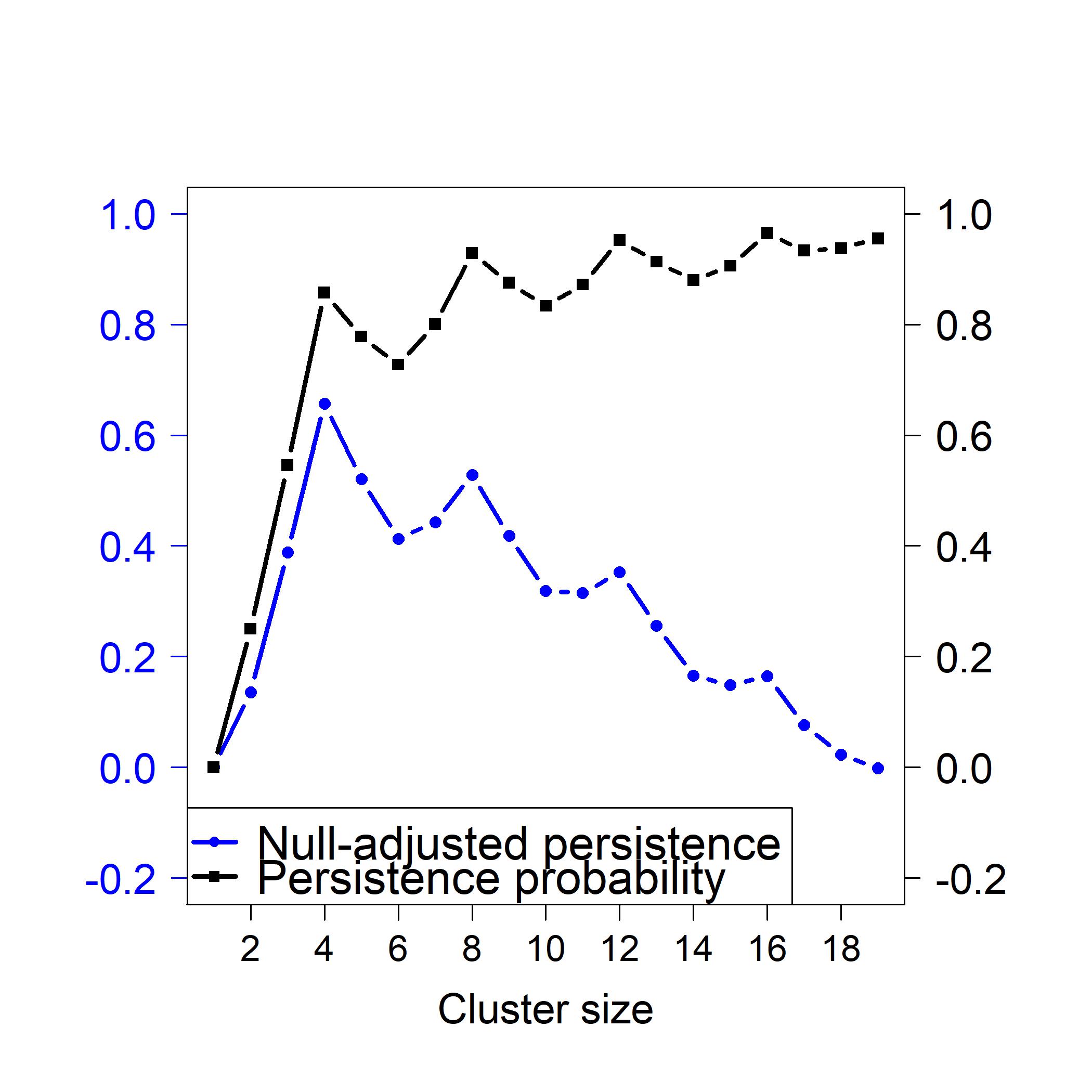}}
	\caption{Panel (a): caveman graph discussed in the text; panel (b): comparison between persistence probability (black line) and null-adjusted persistence (blue line).}
	\label{graph_persistence} 
\end{figure}

\subsection{Null-adjusted persistence function}

Drawing inspiration from the definition of modularity, null-adjusted persistence compares the persistence probability to its expected value under the configuration model. In other words, it compares the probability to the value expected under the null hypothesis, which claims that there is no community structure.

We provide the definition for the unweighted case, and from now on we assume the graph $G=(V,E)$. However, the generalization to the weighted case is straightforward.

Let $\bf B$ be the matrix of elements $b_{ij}=\frac{k_{i}k_{j}}{2m}$, where $m$ is the number of edges in the network.

\begin{definition}
The null-adjusted persistence $\persistence$ of a cluster ${\mathcal{C}}\subseteq V$ is:
\begin{equation}
	\label{definition1}
\persistence=\frac{\sum_{i\in {\mathcal{C}}}\sum_{j\in {\mathcal{C}}}a_{ij}}{\sum_{i\in {\mathcal{C}}}\sum_{j=1}^{n}a_{ij}}-\frac{\sum_{i\in {\mathcal{C}}}\sum_{j\in {\mathcal{C}}}b_{ij}}{\sum_{i\in {\mathcal{C}}}\sum_{j=1}^{n}b_{ij}}
\end{equation}
where $\bf A$ is the adjacency matrix of $G$ and $\bf B$ is the adjacency matrix of the null model. 
\end{definition}

Note that $\persistence$ is the difference between the actual persistence probability (from now on persistence, for short) of the nodes cluster ${\mathcal{C}}$ and the persistence of the same cluster under the null hypothesis of the configuration model, that is assuming a random distribution of the $m$ edges among nodes.
Let us notice that the null model that enters in formula \eqref{definition1} is the same adopted in the definition of the classical modularity introduced by \cite{2006Newman}. This null model is chosen since it ensures that the observed community structure is not simply due to variations in node degrees, and hence it provides for a natural baseline for comparing network partitions.

We can rewrite formula \eqref{definition1} in matrix form. We first introduce the following matrix
\begin{equation*}
{\bf I}_{\mathcal{C}}=
\begin{bmatrix}
	\delta_{1,{\mathcal{C}}} & 0 & 0 & \cdots & 0 \\
	0 & \delta_{2,{\mathcal{C}}} & 0 & \cdots & 0 \\
	0 & 0 & \delta_{3,{\mathcal{C}}} & \cdots & 0 \\
	\vdots & \vdots & \vdots & \ddots & \vdots \\
	0 & 0 & 0 & \cdots & \delta_{n,{\mathcal{C}}}
\end{bmatrix}
\end{equation*}
where
\begin{equation*}
\delta_{i, \mathcal{C}} =
\begin{cases}
	1 & \text{if } i \in \mathcal{C} \\
	0 & \text{if } i \notin \mathcal{C}
\end{cases}.
\end{equation*}
The matrix ${\bf I}_{\mathcal{C}}$ is the identity matrix, whose $1$'s elements on diagonal are turned $0$ for nodes that are not in $\mathcal{C}$. In particular, ${\bf I}_{\mathcal{C}}={\rm diag}\, {\bf 1}_{\mathcal{C}}$ where ${\bf 1}_{\mathcal{C}}$ is the indicator vector corresponding to the community ${\mathcal{C}}$. Therefore, we can rewrite the definition \eqref{definition1} as
\begin{equation}
\label{definition2}
\persistence =
\frac{\sum_{i,j= 1}^{n} \left({\bf I}_{\mathcal{C}} {\bf A} {\bf I}_{\mathcal{C}} \right)_{ij}}{\sum_{i,j= 1}^{n}\left( {\bf A} {\bf I}_{\mathcal{C}} \right)_{ij}}
-
\frac{\sum_{i,j= 1}^{n} \left({\bf I}_{\mathcal{C}} {\bf B} {\bf I}_{\mathcal{C}} \right)_{ij}}{\sum_{i,j= 1}^{n}\left( {\bf B} {\bf I}_{\mathcal{C}} \right)_{ij}}.
\end{equation}

We emphasize that in the first term of the Eq. \eqref{definition2} the internal arcs are counted twice, while the arcs exiting ${\mathcal{C}}$ are counted once. This is made clear by expressing the null-adjusted persistence $\persistence$ of a cluster ${\mathcal{C}} \subseteq V$ as follows:
\begin{equation}
\persistence = \frac{2m_{i}}{2m_{i}+m_{e}}-\frac{2m_{i}+m_{e}}{2m}
\label{definition3}
\end{equation}
where $m_{i}$ is the number of internal arcs in the cluster ${\mathcal{C}}$, $m_{e}$ is the number of outgoing arcs from the cluster ${\mathcal{C}}$, and $m$ is the total number of arcs in the network.

In fact, by direct computation, the first addend in Eq. \eqref{definition2} is
\begin{equation*}
	\frac{\sum_{i,j= 1}^{n} \left({\bf I}_{\mathcal{C}} {\bf A} {\bf I}_{\mathcal{C}} \right)_{ij}}{\sum_{i,j= 1}^{n}\left( {\bf A} {\bf I}_{\mathcal{C}} \right)_{ij}}=\frac{2m_{i}}{2m_{i}+m_{e}},
\end{equation*}
while the second addend in the same equation is given by
\begin{equation*}
\frac{\sum_{i,j= 1}^{n} \left({\bf I}_{\mathcal{C}} {\bf B} {\bf I}_{\mathcal{C}} \right)_{ij}}{\sum_{i,j= 1}^{n}\left( {\bf B} {\bf I}_{\mathcal{C}} \right)_{ij}}
=
\frac{\sum_{i\in {\mathcal{C}}}\sum_{j\in {\mathcal{C}}}k_{i}k_{j}}{\sum_{i\in {\mathcal{C}}}\sum_{j=1}^{n}k_{i}k_{j}}
=
\frac{\left[\sum_{i\in {\mathcal{C}}}k_{i}\right]^{2}}{\left[\sum_{i\in {\mathcal{C}}}k_{i}\right] \cdot \left[\sum_{i=1}^{n}k_{i}\right]}
=
\frac{\left(2m_{i}+m_{e} \right)^{2}}{\left(2m_{i}+m_{e} \right)\cdot 2m}
=
\frac{2m_{i}+m_{e}}{2m}.
\end{equation*}

From Eq. \eqref{definition1}, which refers to a single cluster ${\mathcal{C}}$, we can define the total null-adjusted persistence of the partition $\Pi$ as ${\cal P}^{\star}_{\Pi}=\sum_{\mathcal{C}\subseteq \Pi }\persistence$. Problem \eqref{maxpersistence} can be reformulated as follows:

\begin{equation} \label{maxNApersistence}
\max_{\Pi}{\cal P}^{\star}_{\Pi}=\max_{\Pi} \sum_{{\mathcal{C}_{\alpha}} \in \Pi} {\cal P}^*_{\mathcal{C}_{\alpha}}. 
\end{equation}

The null-adjusted persistence of the partition $\Pi$ and the persistence of the same partition are related by the following:

\begin{proposition}
\label{proposition1}
Let ${\cal P}^{\star}_{\Pi}$ be the total null-adjusted persistence of the partition $\Pi$, and ${\mathcal P}_{\Pi}$ the total persistence of the same partition, then ${\cal P}^{\star}_{\Pi}={\mathcal P}_{\Pi}-1$.
\end{proposition}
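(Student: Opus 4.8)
The identity is really a one-line consequence of the per-cluster decomposition \eqref{definition3} together with the handshake lemma, so the proof is short; the only thing to be careful about is matching the conventions (internal edges counted twice) between the definition of $\persistence$ and that of ${\cal P}_{\mathcal{C}}$.

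First I would fix a cluster ${\mathcal{C}}_{\alpha}\in\Pi$ and recall from \eqref{prs1} that, with internal edges counted twice, ${\cal P}_{{\mathcal{C}}_{\alpha}}=\sum_{i,j\in{\mathcal{C}}_{\alpha}}a_{ij}\big/\sum_{i\in{\mathcal{C}}_{\alpha}}k_{i}=2m_{i}^{(\alpha)}/(2m_{i}^{(\alpha)}+m_{e}^{(\alpha)})$, where $m_{i}^{(\alpha)}$ and $m_{e}^{(\alpha)}$ denote the number of internal and outgoing arcs of ${\mathcal{C}}_{\alpha}$. Comparing with \eqref{definition3}, the first addend of $\persistence$ for ${\mathcal{C}}_{\alpha}$ is exactly ${\cal P}_{{\mathcal{C}}_{\alpha}}$, so
\begin{equation*}
{\cal P}^{*}_{{\mathcal{C}}_{\alpha}} = {\cal P}_{{\mathcal{C}}_{\alpha}} - \frac{2m_{i}^{(\alpha)}+m_{e}^{(\alpha)}}{2m}.
\end{equation*}

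Next I would sum this over all clusters of the partition. Using the definitions ${\cal P}^{\star}_{\Pi}=\sum_{\alpha}{\cal P}^{*}_{{\mathcal{C}}_{\alpha}}$ and ${\cal P}_{\Pi}=\sum_{\alpha}{\cal P}_{{\mathcal{C}}_{\alpha}}$ from \eqref{global persistence}, this gives ${\cal P}^{\star}_{\Pi}={\cal P}_{\Pi}-\frac{1}{2m}\sum_{\alpha}\bigl(2m_{i}^{(\alpha)}+m_{e}^{(\alpha)}\bigr)$. The key step is to evaluate the remaining sum: since $2m_{i}^{(\alpha)}+m_{e}^{(\alpha)}=\sum_{i\in{\mathcal{C}}_{\alpha}}k_{i}$ and $\Pi$ is a partition of $V$, summing over $\alpha$ recovers $\sum_{i\in V}k_{i}=2m$ by the handshake lemma, so $\frac{1}{2m}\sum_{\alpha}(2m_{i}^{(\alpha)}+m_{e}^{(\alpha)})=1$. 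Substituting yields ${\cal P}^{\star}_{\Pi}={\cal P}_{\Pi}-1$, as claimed.

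There is no real obstacle here; the only point requiring a word of care is the bookkeeping convention, namely checking that the ``internal edges counted twice'' normalization used in \eqref{definition3} is the same one implicit in \eqref{prs1}, so that the first addend of the null-adjusted persistence genuinely equals the persistence probability term by term. Equivalently, one can phrase the argument as observing that the null (second) term of $\persistence$ equals $\sum_{i\in{\mathcal{C}}}k_{i}/(2m)$, i.e. the fraction of the total degree carried by ${\mathcal{C}}$, and that such fractions sum to $1$ over any partition.
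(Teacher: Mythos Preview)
Your proof is correct and follows essentially the same route as the paper: both start from the per-cluster expression \eqref{definition3}, identify the first addend with ${\cal P}_{\mathcal{C}}$, and observe that the second addends $(2m_i+m_e)/(2m)=\sum_{i\in\mathcal{C}}k_i/(2m)$ sum to $1$ over the partition by the handshake lemma. Your write-up is in fact a bit more explicit about the bookkeeping convention than the paper's.
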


\begin{proof}
By Eq. \eqref{definition3}, we have
\begin{equation}
\begin{split}
{\cal P}^{\star}_{\Pi} & = \sum_{{\mathcal C}}\persistence
  = \sum_{{\mathcal C}} \left( \frac{2m_{i}}{2m_{i}+m_{e}}-\frac{2m_{i}+m_{e}}{2m} \right)
  = \sum_{{\mathcal C}} \frac{2m_{i}}{2m_{i}+m_{e}} - \sum_{{\mathcal C}} \frac{2m_{i}+m_{e}}{2m}\\
  & =  \sum_{{\mathcal C}} \frac{2m_{i}}{2m_{i}+m_{e}} - \frac{\sum_{{\mathcal C}}\left[\sum_{i\in {\mathcal C}}k_{i}\right]}{\sum_{i=1}^{n}k_{i}}
  ={\mathcal P}_{\Pi}-1.
\end{split}
\end{equation}
\end{proof}
This result shows that functions ${\cal P}^{\star}_{\Pi}$ and ${\mathcal P}_{\Pi}$ computed on the same partition differ by a constant, and therefore a partition that maximizes one will also maximize the other. Thus, from the point of view of finding an optimal partition, they are equivalent. Conversely, as will be shown in the next section, they exhibit very different behaviors when computed on individual clusters, which has consequences on the way in which a heuristic should be designed.

\section{Analytical results}
\label{sec3}

\subsection{Extreme values of the null-adjusted persistence}
Here, we prove some analytical results about the null-adjusted persistence. First, we focus on the extreme values of the total null-adjusted persistence ${\cal P}^{\star}_{\Pi}$.

\begin{proposition}
\label{proposition2}
Let $G$ be a network made up of $l>1$ connected components $\mathcal{C}_{\alpha}$, $\alpha=1,\dots, l$, of equal size. The total null-adjusted persistence ${\cal P}^{\star}_{\Pi}$ with respect to the natural partition into the single components $\Pi = \{\mathcal{C}_1,\ldots, \mathcal{C}_l \}$ is given by
${\cal P}^{\star}_{\Pi}=l-1$.
\end{proposition}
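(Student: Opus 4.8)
The plan is to reduce the statement to the single structural fact that a connected component has no edges leaving it, and then to conclude either by a one-line computation from Eq.~\eqref{definition3} or, more economically, by invoking Proposition~\ref{proposition1}. First I would fix one component $\mathcal{C}_{\alpha}$ and use the defining property of a connected component, namely that no edge of $G$ joins a vertex of $\mathcal{C}_{\alpha}$ to a vertex outside $\mathcal{C}_{\alpha}$; hence, writing $m_{i}^{(\alpha)}$ for the number of internal edges of $\mathcal{C}_{\alpha}$, the number of outgoing arcs is $m_{e}=0$. Substituting $m_{e}=0$ into Eq.~\eqref{definition3} gives
\[
{\cal P}^{\star}_{\mathcal{C}_{\alpha}}=\frac{2m_{i}^{(\alpha)}}{2m_{i}^{(\alpha)}}-\frac{2m_{i}^{(\alpha)}}{2m}=1-\frac{m_{i}^{(\alpha)}}{m},
\]
so each component is perfectly persistent (its actual persistence equals $1$) and is penalized only by the fraction $m_{i}^{(\alpha)}/m$ of the edges of $G$ that it contains.

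Next I would sum over the $l$ blocks of $\Pi$. The contributions equal to $1$ add up to $l$, while the penalty terms $m_{i}^{(\alpha)}/m$ sum to $1$, because every one of the $m$ edges of $G$ lies inside exactly one component, so $\sum_{\alpha=1}^{l} m_{i}^{(\alpha)}=m$. Hence ${\cal P}^{\star}_{\Pi}=l-1$. Equivalently, one may first note that ${\cal P}_{\mathcal{C}_{\alpha}}=2m_{i}^{(\alpha)}/(2m_{i}^{(\alpha)})=1$ for every $\alpha$ by Eq.~\eqref{prs1}, whence ${\cal P}_{\Pi}=\sum_{\alpha}{\cal P}_{\mathcal{C}_{\alpha}}=l$, and then apply Proposition~\ref{proposition1} to obtain ${\cal P}^{\star}_{\Pi}={\cal P}_{\Pi}-1=l-1$. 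I would use this shorter route as the main argument and keep the direct computation as an alternative check.

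There is no genuinely difficult step: the only substantive input is the identity $m_{e}=0$ for a connected component, and the rest is bookkeeping. The single point deserving a word of care is the degenerate case in which some component is an isolated vertex (equivalently $m=0$), so that $2m_{i}^{(\alpha)}+m_{e}=\sum_{i\in\mathcal{C}_{\alpha}}k_{i}=0$ and both fractions defining its persistence become $0/0$; since each block of $\Pi$ is a connected induced subgraph, assuming as usual that the components are nontrivial makes every ${\cal P}_{\mathcal{C}_{\alpha}}$ well defined and the identity meaningful. Finally, it is worth noting that the equal-size hypothesis is not actually used anywhere above: the identity ${\cal P}^{\star}_{\Pi}=l-1$ holds for connected components of arbitrary sizes.
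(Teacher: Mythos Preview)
Your argument is correct and follows the same basic route as the paper---plug $m_{e}=0$ into Eq.~\eqref{definition3} and sum---but with one improvement worth noting. The paper invokes the equal-size hypothesis to set $m_{i}=m/l$ for every component, so that each block contributes $1-1/l$ and the total is $l(1-1/l)=l-1$. You instead keep the internal edge counts $m_{i}^{(\alpha)}$ arbitrary and use only $\sum_{\alpha} m_{i}^{(\alpha)}=m$, which is exactly what makes the equal-size assumption superfluous, as you rightly observe at the end. Your alternative route via Proposition~\ref{proposition1} (each component has persistence $1$, hence ${\cal P}_{\Pi}=l$ and ${\cal P}^{\star}_{\Pi}=l-1$) is not in the paper and is the cleanest way to see the result; it also makes the generality transparent. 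The caveat about isolated vertices is appropriate and does not affect the validity of the argument under the paper's standing assumptions.
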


\begin{proof}
Let $m$ be the total number of arcs in the network. Let us refer to the partition, naturally induced by the topology of the network, into the $l$ components. Then, for each component ${\mathcal{C}}_{\alpha}$, we have $m_{i}=\frac{m}{l}$ and $m_{e}=0$.  From Eq. \eqref{definition3}, we obtain
\begin{equation}
{\mathcal P}^{\star}_{\mathcal{C}_{\alpha}}
=
\frac{2m_{i}}{2m_{i}+m_{e}}-\frac{2m_{i}+m_{e}}{2m}
=
\frac{2\frac{m}{l}}{2\frac{m}{l}}-\frac{2\frac{m}{l}}{2m}
=1-\frac{1}{l}.
\end{equation}
The total null-adjusted persistence of the partition $\Pi$ is then ${\cal P}^{\star}_{\Pi}=l\left( 1-\frac{1}{l} \right)=l-1.$
\end{proof}

Proposition \ref{proposition2} shows that ${\cal P}^{\star}_{\Pi}$ is unbounded from above, as ${\cal P}^{\star}_{\Pi}\to +\infty$ when $l\to +\infty$.
The next result provides the minimum value for ${\cal P}^{\star}_{\Pi}$. At first, recall that a $k-$partite graph is a loopless graph whose vertices can be partitioned into $k$ independent sets, that is, sets of mutually non-adjacent vertices (see \cite{Gross2013}). 

\begin{proposition}
\label{proposition3}
The total null-adjusted persistence ${\cal P}^{\star}_{\Pi}$ is bounded from below by ${\cal P}^{\star}_{\Pi} \geq -1$, and the minimum value $-1$ is attained by any multi-partite graph with respect to its canonical partition $\Pi$.
\end{proposition}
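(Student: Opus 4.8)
The plan is to reduce both assertions to Proposition~\ref{proposition1}, which gives ${\cal P}^{\star}_{\Pi} = {\cal P}_{\Pi} - 1$; then the bound ${\cal P}^{\star}_{\Pi} \ge -1$ is equivalent to ${\cal P}_{\Pi} \ge 0$, and the minimum value $-1$ is attained exactly when ${\cal P}_{\Pi} = 0$.

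First I would observe that ${\cal P}_{\Pi} = \sum_{{\mathcal C} \in \Pi} \frac{2m_{i}}{2m_{i} + m_{e}}$ is a sum of terms each lying in $[0,1]$: they are well defined because every cluster of a connected graph partitioned into $q>1$ parts has at least one incident arc, so $2m_{i}+m_{e}>0$. Non-negativity of the summands yields ${\cal P}_{\Pi} \ge 0$ immediately, hence ${\cal P}^{\star}_{\Pi} \ge -1$.

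For the attainment of the minimum, since each summand of ${\cal P}_{\Pi}$ is non-negative, ${\cal P}_{\Pi}=0$ holds iff $m_{i}=0$ for every cluster, i.e. iff each cluster is an independent set; this is exactly the canonical partition $\Pi$ of a multi-partite graph. In that case I would also verify the value directly from Eq.~\eqref{definition3}: with $m_{i}=0$ one gets ${\cal P}^{\star}_{{\mathcal C}_{\alpha}} = -\frac{m_{e}^{(\alpha)}}{2m}$; since a multi-partite graph has no intra-part edges, every arc incident to ${\mathcal C}_{\alpha}$ leaves it, so $m_{e}^{(\alpha)} = \sum_{i\in {\mathcal C}_{\alpha}} k_{i}$, and summing over $\alpha$ gives $\sum_{\alpha} m_{e}^{(\alpha)} = \sum_{i\in V} k_{i} = 2m$. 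Therefore ${\cal P}^{\star}_{\Pi} = -\frac{1}{2m}\sum_{\alpha} m_{e}^{(\alpha)} = -1$.

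I do not expect a genuine obstacle: the routine points are the well-definedness of the ratios, the equivalence between ${\cal P}_{\Pi}=0$ and all clusters being independent sets, and the double-counting identity $\sum_{\alpha} m_{e}^{(\alpha)} = 2m$ (each crossing arc is incident to exactly two parts). One caveat worth flagging is that the paper's definition of a clustering requires connected induced subgraphs, whereas an independent set of size $>1$ is disconnected, so the statement is best read with the canonical partition of a multi-partite graph treated as a degenerate clustering.
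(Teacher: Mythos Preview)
Your proof is correct and follows essentially the same route as the paper: both reduce to Proposition~\ref{proposition1} and use the non-negativity of each persistence term ${\mathcal P}_{\mathcal C}=\frac{2m_i}{2m_i+m_e}$ to get the bound, with equality exactly when every $m_i=0$. Your direct verification via $\sum_\alpha m_e^{(\alpha)}=2m$ and your caveat about the connectedness requirement on clusters are tidy additions that the paper does not make explicit.
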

\begin{proof}
The function $\persistence = \frac{2m_{i}}{2m_{i}+m_{e}}-\frac{2m_{i}+m_{e}}{2m}$ is a strictly decreasing function of $m_{e}$. The contribution of a cluster is minimized when $m_{i}$ is zero and $m_{e}$ is as large as possible. Moreover, by Proposition \ref{proposition1},
$
{\cal P}^{\star}_{\Pi} =
\sum_{\mathcal{C}} \persistence =
\sum_{\mathcal{C}} {\mathcal{P}}_{\mathcal{C}}-1. 
$
Since ${\mathcal{P}}_{\mathcal{C}}=\frac{2m_{i}}{2m_{i}+m_{e}}$,
if $m_{i}=0$, for any ${\mathcal{C}}$, that is the graph is any multi-partite graph, then we get $\sum_{\mathcal{C}}\persistence=-1$. Moreover, if $m_{i}>0$, for some ${\mathcal{C}}$, then  $\sum_{\mathcal{C}}\persistence>-1$. This excludes the possibility that there may exist a partition with null-adjusted persistence less than $-1$.
\end{proof}
By previous results, we conclude that $-1\leq {\cal P}^{\star}_{\Pi} <+\infty$. \footnote{We recall that the minimum value of the total modularity function ${Q}_{\Pi}=\sum_{\mathcal{C}}{\rm Q}_{\mathcal{C}}$ is $-\frac{1}{2}$ and the minimum  
is attained only by a bipartite graph, when $Q_{\mathcal{C}} =-\frac{1}{4}$ for both the clusters in the natural bipartition of the network $G$.}

Let us note that higher persistence for one cluster compared to another does not guarantee greater null-adjusted persistence. For example, consider cluster ${\mathcal{C}}_{1}$ with $m_{i}=6$ and $m_{e}=4$ in a graph with $m=20$: this yields ${\mathcal{P}}_{\mathcal{C}_{1}}=0.75$ and ${\mathcal{P}}_{\mathcal{C}_{1}}^{\star}=0.35$.
For a cluster ${\mathcal{C}}_{2}$ in the same graph with $m_{i}=8$ and $m_{e}=4$, it is ${\mathcal{P}}_{\mathcal{C}_{1}}=0.80$ and ${\mathcal{P}}_{\mathcal{C}_{1}}^{\star}=0.30$. Therefore, increased persistence does not inherently correspond to higher null-adjusted persistence.

We now investigate the behavior of the null-adjusted persistence $\persistence$ in comparison with the classical modularity function computed for a cluster $\mathcal{C}$.
In the same notation as in Eq. \eqref{definition3}, the modularity $Q_{\mathcal{C}}$ of a given cluster ${\mathcal{C}}\subseteq V$ can be expressed as (see \cite{brandes2008})
\begin{equation}
Q_{\mathcal{C}} = \frac{m_{i}}{m}-\left(\frac{2m_{i}+m_{e}}{2m}\right)^2,
\label{Q}
\end{equation}
where the first term corresponds to the internal edge density and the second one to the expected edge density in the null model. Studied as functions of the single variable $m_i$ (fixing values $m$ and $m_e$) they show quite similar behaviors.
Indeed, both vanish at the same values of $m_{i}$, that is $\frac{m-m_e}{2}\pm\frac{1}{2}\sqrt{m(m-2m_e)}$, but they have a maximum at different values of $m_{i}$. 
Specifically, the local maximum of $\persistence$ is attained in $\hat{m}_{i}=\frac{1}{2}\left( \sqrt{2mm_{e}}-m_{e} \right)$ and it is equal to $\persistence(\hat{m}_{i})=1-\sqrt{\frac{2m_{e}}{m}}$. 
Conversely, the maximum for $Q_{\mathcal{C}}$ is attained in $\hat{m}_{i}=\frac{1}{2}\left( m-m_{e} \right)$ and it is equal to $Q_{\mathcal{C}}(\hat{m}_{i})=\frac{1}{4}\left( 1-\frac{2m_{e}}{m}\right)$. We conclude that while modularity $Q_{\mathcal{C}}$ is symmetric with respect to its maximum, persistence $\persistence$ is not. The behavior of the two functions $Q_{\mathcal{C}}$ and $\persistence$ with respect to the variable $m_{i}$ is depicted in Fig \ref{fig1}.
\begin{figure}[H]
\centering
\includegraphics[width=0.6\textwidth]{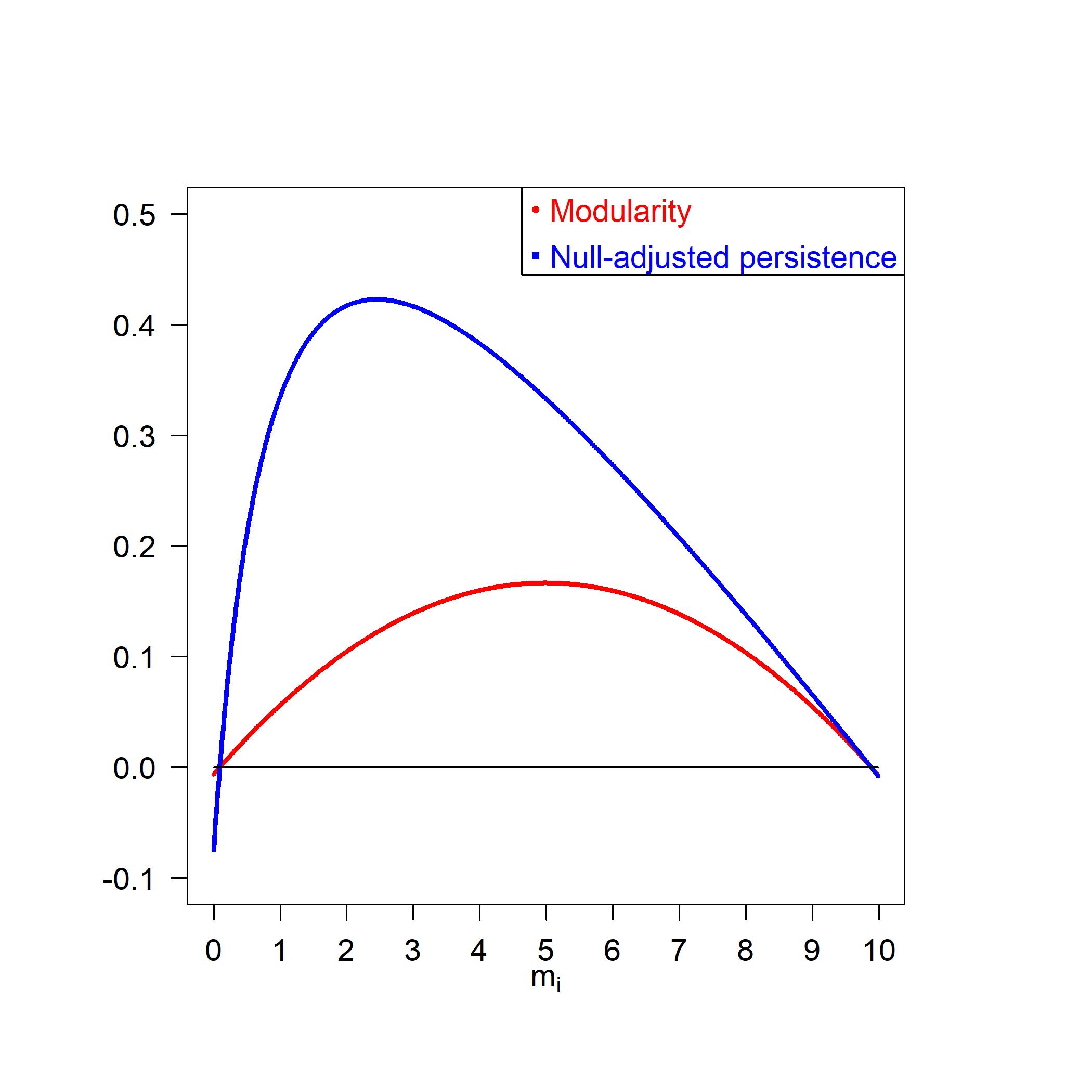}
\caption{$Q_{\mathcal{C}}$ (red line) and $\persistence$ (blue line) as functions of $m_i$. For the sake of representation, $m_{e}$ and $m$ are set to values $m_{e}=2$ and $m=12$. Both functions vanish at the same values $m_{i}=5\pm 2\sqrt{6}$. Conversely, the modularity has a maximum for $m_{i}=5$, whereas the null-adjusted persistence attains the maximum at $m_{i}=2\sqrt{3}-1=2.4641$.}
\label{fig1} 
\end{figure}
The different maxima for the two functions provide an interesting insight into our task. Indeed, if we use $Q_{\mathcal{C}}$ or $\persistence$ to find network communities, all else being equal, the maximum of the two measures is attained with a different cluster size: the null-adjusted persistence will give more weight to clusters with fewer internal arcs than modularity. Therefore, if the network contains small-sized communities, the null-adjusted persistence is more appropriate than modularity to reveal its mesoscale structure and to bring out this structure at higher resolution. Finally, we stress that, by contrast, the persistence probability $\mathcal{P}_{\mathcal{C}}$ is a monotonically increasing function with respect to $m_{i}$, and thus exhibits a behavior that, on the individual cluster, makes it not comparable to modularity.

In the next section, we highlight some further features of the null-adjusted persistence.

\subsection{Scaling behavior of the null-adjusted persistence}

In \cite{brandes2008} the authors point out that the modularity exhibits the so-called sensitivity to satellites: it identifies a clique as a natural cluster, but in presence of a clique with $l$ leafs (precisely, satellites) the optimal $Q_{\Pi}$ is attained for the clustering $\Pi$ formed by $l$ clusters. We can observe the same behavior with null-adjusted persistence, as the following example shows.
Let us consider the complete clique $K_{3}$ with leaves, represented in Fig. \ref{fig2}, panel (a). Both modularity and null-adjusted persistence disaggregate the graph into three clusters, each containing a leaf as shown in panel (b), and do not preserve the inner clique unit in panel (c).

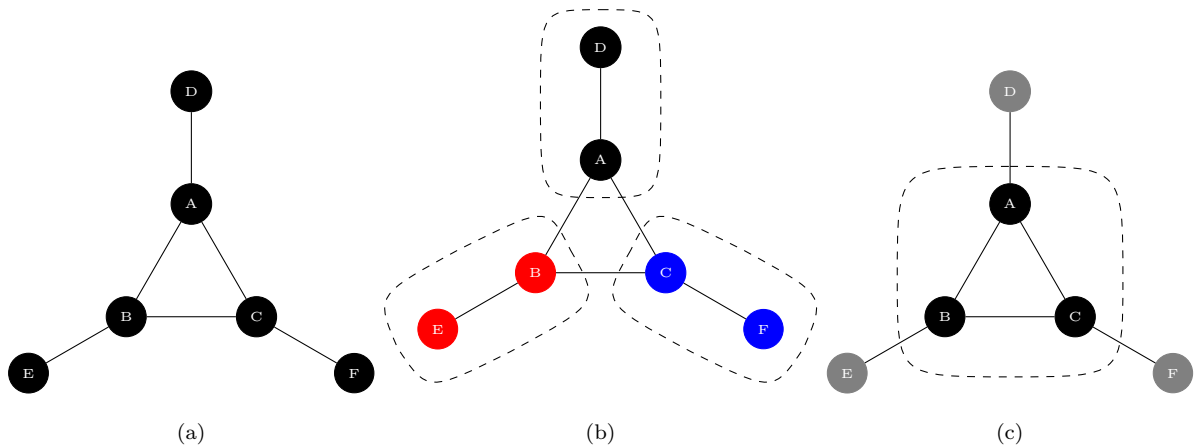
\begin{figure}[H]
\centering
\subfloat[]{\begin{tikzpicture}
\node[circle, draw=black, fill=black, minimum size=1mm, text=white] (A) at ({90+360/3 * (1 - 1)}:1cm) {\tiny A};
\node[circle, draw=black, fill=black, minimum size=1mm, text=white] (B) at ({90+360/3 * (2 - 1)}:1cm) {\tiny B};
\node[circle, draw=black, fill=black, minimum size=1mm, text=white] (C) at ({90+360/3 * (3 - 1)}:1cm) {\tiny C};
\node[circle, draw=black, fill=black, minimum size=1mm, text=white] (D) at ({90+360/3 * (1 - 1)}:2.5cm) {\tiny D};
\node[circle, draw=black, fill=black, minimum size=1mm, text=white] (E) at ({90+360/3 * (2 - 1)}:2.5cm) {\tiny E};
\node[circle, draw=black, fill=black, minimum size=1mm, text=white] (F) at ({90+360/3 * (3 - 1)}:2.5cm) {\tiny F};
\draw (A) -- (B);
\draw (A) -- (C);
\draw (B) -- (C);
\draw (A) -- (D);
\draw (B) -- (E);
\draw (C) -- (F);
\end{tikzpicture}}
\subfloat[]{\begin{tikzpicture}
\node[circle, draw=black, fill=black, minimum size=1mm, text=white] (A) at ({90+360/3 * (1 - 1)}:1cm) {\tiny A};
\node[circle, draw=red, fill=red, minimum size=1mm, text=white] (B) at ({90+360/3 * (2 - 1)}:1cm) {\tiny B};
\node[circle, draw=blue, fill=blue, minimum size=1mm, text=white] (C) at ({90+360/3 * (3 - 1)}:1cm) {\tiny C};
\node[circle, draw=black, fill=black, minimum size=1mm, text=white] (D) at ({90+360/3 * (1 - 1)}:2.5cm) {\tiny D};
\node[circle, draw=red, fill=red, minimum size=1mm, text=white] (E) at ({90+360/3 * (2 - 1)}:2.5cm) {\tiny E};
\node[circle, draw=blue, fill=blue, minimum size=1mm, text=white] (F) at ({90+360/3 * (3 - 1)}:2.5cm) {\tiny F};
\draw (A) -- (B);
\draw (A) -- (C);
\draw (B) -- (C);
\draw (A) -- (D);
\draw (B) -- (E);
\draw (C) -- (F);
\draw[dashed]
(0, 3.0) .. controls (0.8, 3.0) .. (0.8, 1.8) 
.. controls (0.8, 0.5) .. (0, 0.5) 
.. controls (-0.8, 0.5) .. (-0.8, 1.8) 
.. controls (-0.8, 3.0) .. (0, 3.0); 
\draw[dashed]
(-0.4, -0.2) .. controls (-0.0,-0.9) .. (-1.2, -1.55) 
.. controls (-2.3, -2.1) .. (-2.7, -1.5) 
.. controls (-3.0, -0.8) .. (-1.9, -0.2) 
.. controls (-0.7, 0.4) .. (-0.4, -0.2); 
\draw[dashed]
(0.4, -0.2) .. controls (0.0,-0.9) .. (1.2, -1.55) 
.. controls (2.3, -2.1) .. (2.7, -1.5) 
.. controls (3.0, -0.8) .. (1.9, -0.2) 
.. controls (0.7, 0.4) .. (0.4, -0.2); 
\end{tikzpicture}}
\subfloat[]{\begin{tikzpicture}
\node[circle, draw=black, fill=black, minimum size=1mm, text=white] (A) at ({90+360/3 * (1 - 1)}:1cm) {\tiny A};
\node[circle, draw=black, fill=black, minimum size=1mm, text=white] (B) at ({90+360/3 * (2 - 1)}:1cm) {\tiny B};
\node[circle, draw=black, fill=black, minimum size=1mm, text=white] (C) at ({90+360/3 * (3 - 1)}:1cm) {\tiny C};
\node[circle, draw=gray, fill=gray, minimum size=1mm, text=white] (D) at ({90+360/3 * (1 - 1)}:2.5cm) {\tiny D};
\node[circle, draw=gray, fill=gray, minimum size=1mm, text=white] (E) at ({90+360/3 * (2 - 1)}:2.5cm) {\tiny E};
\node[circle, draw=gray, fill=gray, minimum size=1mm, text=white] (F) at ({90+360/3 * (3 - 1)}:2.5cm) {\tiny F};
\draw (A) -- (B);
\draw (A) -- (C);
\draw (B) -- (C);
\draw (A) -- (D);
\draw (B) -- (E);
\draw (C) -- (F);
\draw[dashed]
(0, 1.5) .. controls (1.5, 1.5) and (1.5, 1.5) .. (1.5, 0) 
.. controls (1.5, -1.3) and (1.5, -1.3) .. (0, -1.3) 
.. controls (-1.5, -1.3) and (-1.5, -1.3) .. (-1.5, -0) 
.. controls (-1.5, 1.5) and (-1.5, 1.5) .. (0,1.5); 
\end{tikzpicture}}
\caption{Connected network $G$ formed by the inner clique $K_{3}$ and three leaves (panels (a) and (c)). Optimal partition of the network $G$ according to modularity and null-adjusted persistence (panel (b)).}
\label{fig2} 
\end{figure}

\begin{figure}[H]
	\centering
	\includegraphics[width=0.6\textwidth]{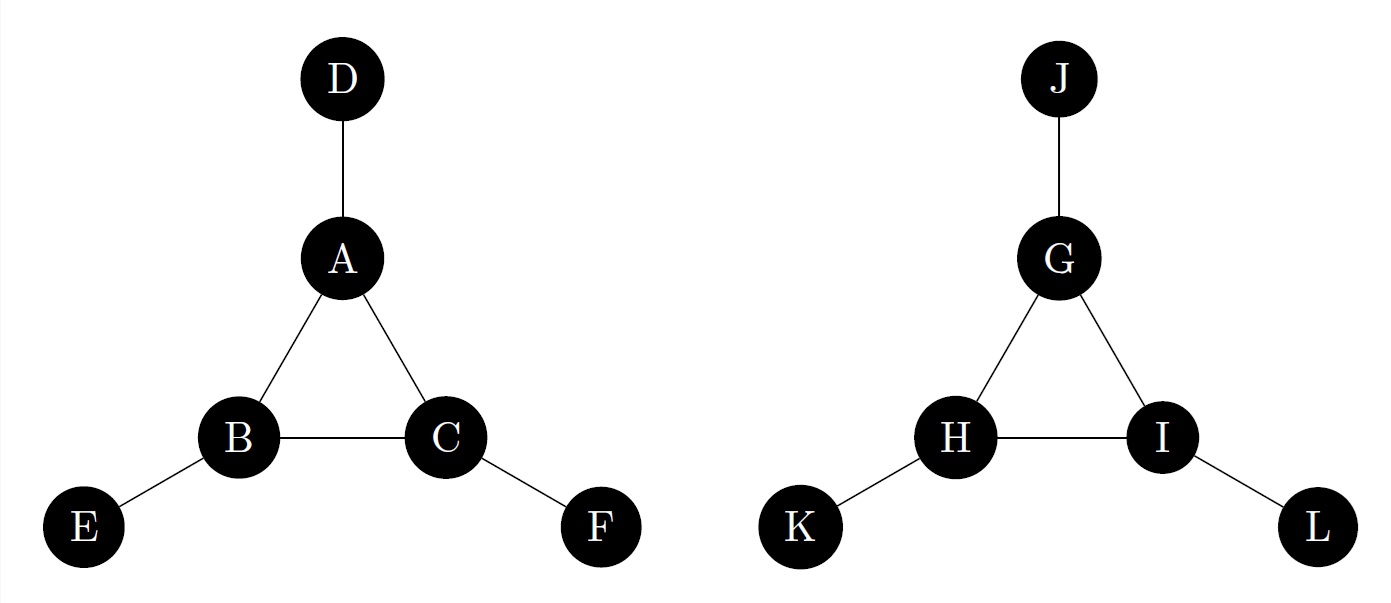}
	\caption{Disconnected network $G$ formed by two cliques $K_{3}$ with leaves.}
	\label{fig3} 
\end{figure}
However, if we consider a non-connected network $G$ originally composed by two identical and disjoint cliques $K_{3}$ with leaves, (see Fig \ref{fig3}), the modularity identifies an optimal partition that keeps each clique in a single cluster containing the satellites, too. This result
conflicts with what previously found, in which each clique was split into three components (see Fig. \ref{fig4}). In other words, as also \cite{brandes2008} point out, modularity does not exhibit a scale-invariant behavior.
\begin{figure}[H]
	\centering
	\includegraphics[width=0.6\textwidth]{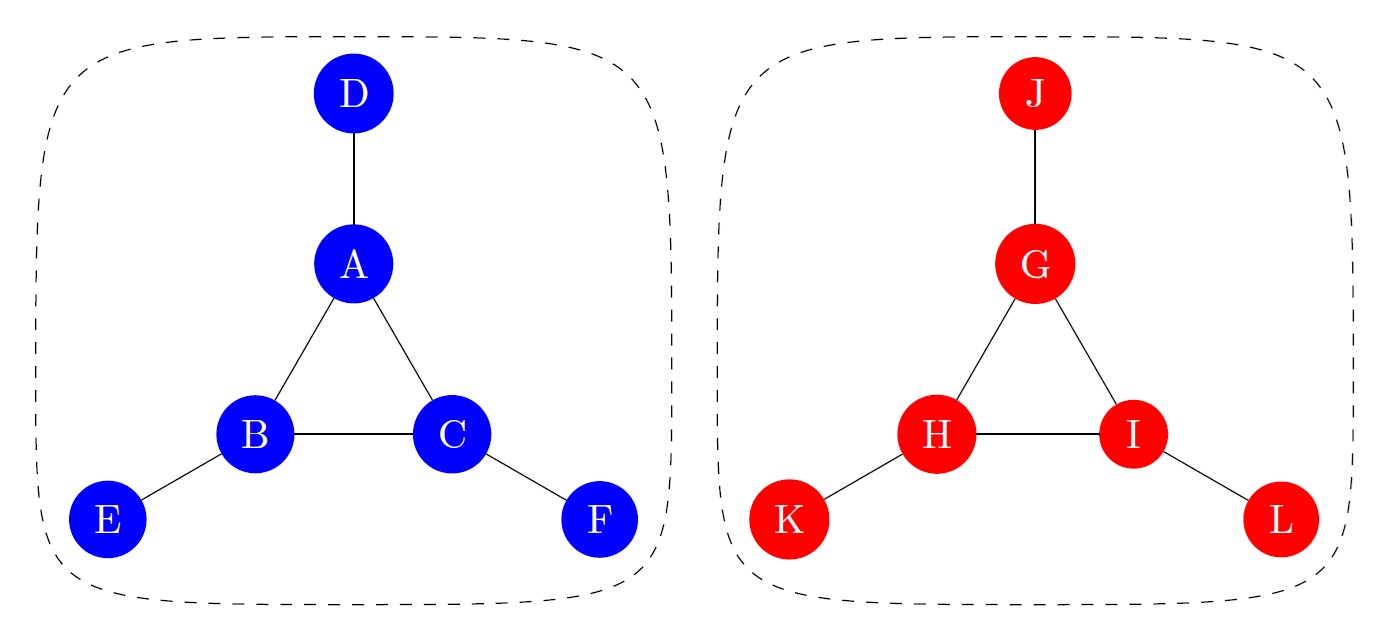}
	\caption{Optimal partition $\Pi$ of the disconnected network $G$ formed by two cliques $K_{3}$ with leaves, according to modularity $Q(\Pi)$).}
	\label{fig4} 
\end{figure}
Conversely, for the null-adjusted persistence, the optimal partition in which each clique is divided into three components is preserved even if we double the clique, as shown in Fig. \ref{fig5}, suggesting that  the null-adjusted persistence {\it is} scale invariant.
\begin{figure}[H]
	\centering
	\includegraphics[width=0.6\textwidth]{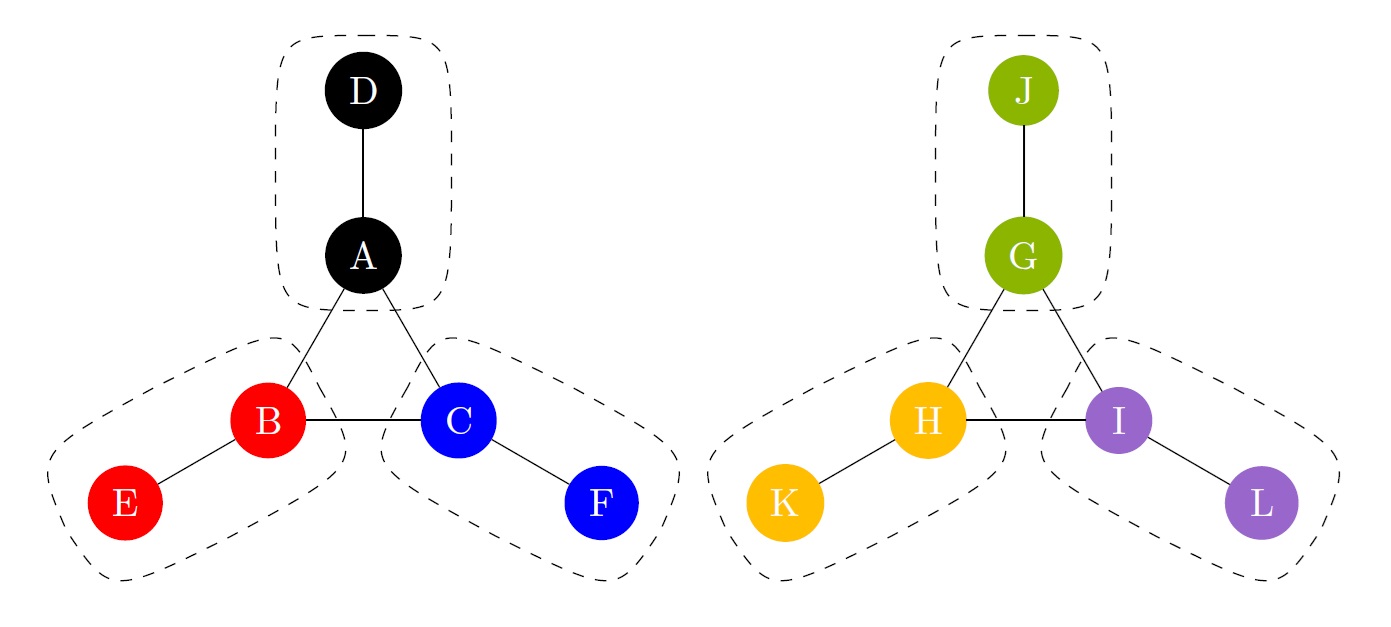}
	\caption{Optimal partition of the two cliques $K_{3}$ with leaves according to the null-adjusted persistence ${\cal P}^{\star}_{\Pi}$.}
	\label{fig5} 
\end{figure}

\subsection{The Resolution Limit}
One of the problems often pointed out with modularity is that it has an intrinsic scale dependence, which limits the number and size of modules it can detect. We show in the next result that the null-adjusted persistence does not suffer from this limitation.

\begin{proposition}
\label{resolution_prop}
Let $G=(V,E)$ be a connected graph of $l>1$, $l$ even, identical cliques ${\mathcal{C}}$ connected in a circle, in such a way that each pair of adjacent cliques is connected by a single edge. Let ${\Pi}_{1}$ be the partition of $G$ into the $l$ cliques and ${\Pi}_{2}$ be the partition of $G$ into the $\frac{l}{2}$ pairs of adjacent cliques. Then ${\cal P}^{\star}_{\Pi_1}>{\cal P}^{\star}_{\Pi_2}$.
\end{proposition}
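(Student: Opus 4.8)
The plan is a direct computation of ${\cal P}^{\star}_{\Pi_1}$ and ${\cal P}^{\star}_{\Pi_2}$ via Eq.~\eqref{definition3}, cluster by cluster. First I would fix notation: write $r$ for the number of internal edges of a single clique (so $r=\binom{c}{2}$ if each clique has $c$ vertices, $c\ge 2$), and observe that the circular arrangement adds exactly $l$ edges, one per adjacent pair, so that the total is $m=l(r+1)$. This identity is what makes the null term $\tfrac{2m_i+m_e}{2m}$ collapse to something clean for both partitions.

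For $\Pi_1$ each cluster is a single clique, so $m_i=r$ and $m_e=2$ (the clique meets each of its two neighbours on the circle in exactly one edge). Plugging into Eq.~\eqref{definition3} and using $m=l(r+1)$ gives ${\cal P}^{\star}_{\mathcal C}=\tfrac{r}{r+1}-\tfrac1l$, hence ${\cal P}^{\star}_{\Pi_1}=\tfrac{lr}{r+1}-1$ after summing over the $l$ cliques. For $\Pi_2$ each cluster is a pair of adjacent cliques joined by one edge, so $m_i=2r+1$; and because $l$ is even and the pairs consist of consecutive cliques, each merged cluster again has exactly $m_e=2$ boundary edges, one at each end of the pair along the circle. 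This yields ${\cal P}^{\star}_{\mathcal C}=\tfrac{2r+1}{2(r+1)}-\tfrac2l$ and, summing over the $l/2$ pairs, ${\cal P}^{\star}_{\Pi_2}=\tfrac{l(2r+1)}{4(r+1)}-1$.

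The conclusion is then immediate: ${\cal P}^{\star}_{\Pi_1}-{\cal P}^{\star}_{\Pi_2}=\tfrac{lr}{r+1}-\tfrac{l(2r+1)}{4(r+1)}=\tfrac{l(2r-1)}{4(r+1)}>0$, since $l>1$ and $r\ge 1$. Alternatively one can avoid the constant offsets altogether by invoking Proposition~\ref{proposition1} and comparing the bare persistences ${\cal P}_{\Pi_1}=\tfrac{lr}{r+1}$ and ${\cal P}_{\Pi_2}=\tfrac{l(2r+1)}{4(r+1)}$ directly. There is no real obstacle here beyond careful edge-counting; the only point that deserves an explicit line of justification is the claim that both a lone clique and a paired cluster have exactly two outgoing edges, which rests on the circular topology and, for $\Pi_2$, on the evenness of $l$ guaranteeing a consistent pairing of consecutive cliques.
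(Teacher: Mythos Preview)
Your proposal is correct and follows essentially the same approach as the paper's proof: both compute ${\cal P}^{\star}_{\Pi_1}$ and ${\cal P}^{\star}_{\Pi_2}$ directly from Eq.~\eqref{definition3} using the counts $m_i=r$, $m_e=2$ for $\Pi_1$ and $m_i=2r+1$, $m_e=2$ for $\Pi_2$ (the paper writes $k$ for your $r$), obtain the same closed forms $\tfrac{lr}{r+1}-1$ and $\tfrac{l(2r+1)}{4(r+1)}-1$, and conclude by the elementary inequality $r>\tfrac12$. Your explicit difference $\tfrac{l(2r-1)}{4(r+1)}$ and the remark about the edge-counting resting on the circular topology and the evenness of $l$ are minor additions, not a different route.
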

\begin{proof}
Let $k$ be the number of arcs inside each clique ${\mathcal{C}}$ and consider the partition ${\Pi}_{1}$.
In this case, $m_{i}=k$, $m_{e}=2$ and $m=l(k+1)$.
By formula \eqref{definition3} we obtain,  for each cluster ${\mathcal{C}}$,
$$\persistence
=\frac{2k}{2k+2}-\frac{2k+2}{2l(k+1)}=\frac{k}{k+1}-\frac{1}{l}.$$
The total null-adjusted persistence of the partition ${\Pi}_{1}$ is then
${\cal P}^{\star}_{\Pi_1}=\frac{kl}{k+1}-1$.

Consider now the partition ${\Pi}_{2}$: $m_{i}=2k+1$, $m_{e}=2$ and $m=l(k+1)$.
By Formula \eqref{definition3}, we obtain, for each cluster ${\mathcal{C}}$,
$$\persistence
=
\frac{2(2k+1)}{2(2k+1)+2}-\frac{2(2k+1)+2}{2l(k+1)}
=
\frac{2k+1}{2k+2}-\frac{2}{l}.
$$
The total null-adjusted persistence of the partition ${\Pi}_{2}$ is then
${\cal P}^{\star}_{\Pi_2}
=
\frac{l}{2} \left[ \frac{2k+1}{2k+2}-\frac{2}{l} \right]
=
\frac{(2k+1)l}{4(k+1)}-1
$
Since the inequality
\begin{equation*}
	\frac{kl}{k+1}-1>\frac{(2k+1)l}{4(k+1)}-1
\end{equation*}
equals $k>\frac{1}{2}$, which is satisfied for any $l$, we can conclude that ${\cal P}^{\star}_{\Pi_1} > {\cal P}^{\star}_{\Pi_2}$.
\end{proof}
Clusters belonging to partitions ${\Pi}_{1}$ and ${\Pi}_{2}$ of Proposition \ref{resolution_prop} are represented in Fig. \ref{fig7}.
\begin{figure}[H]
	\centering
        \includegraphics[width=0.41\textwidth]{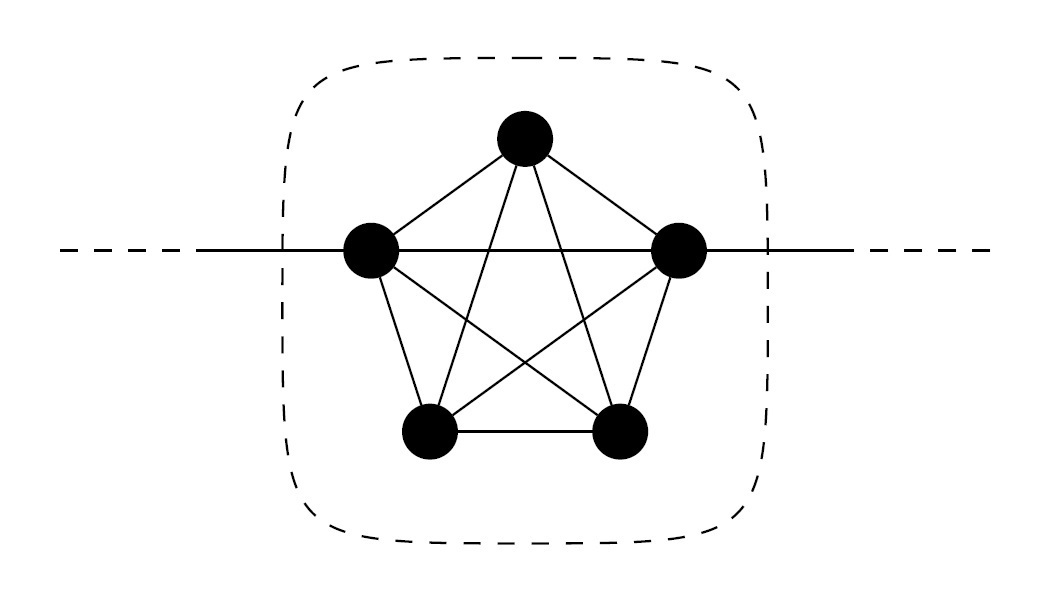}
        \includegraphics[width=0.54\textwidth]{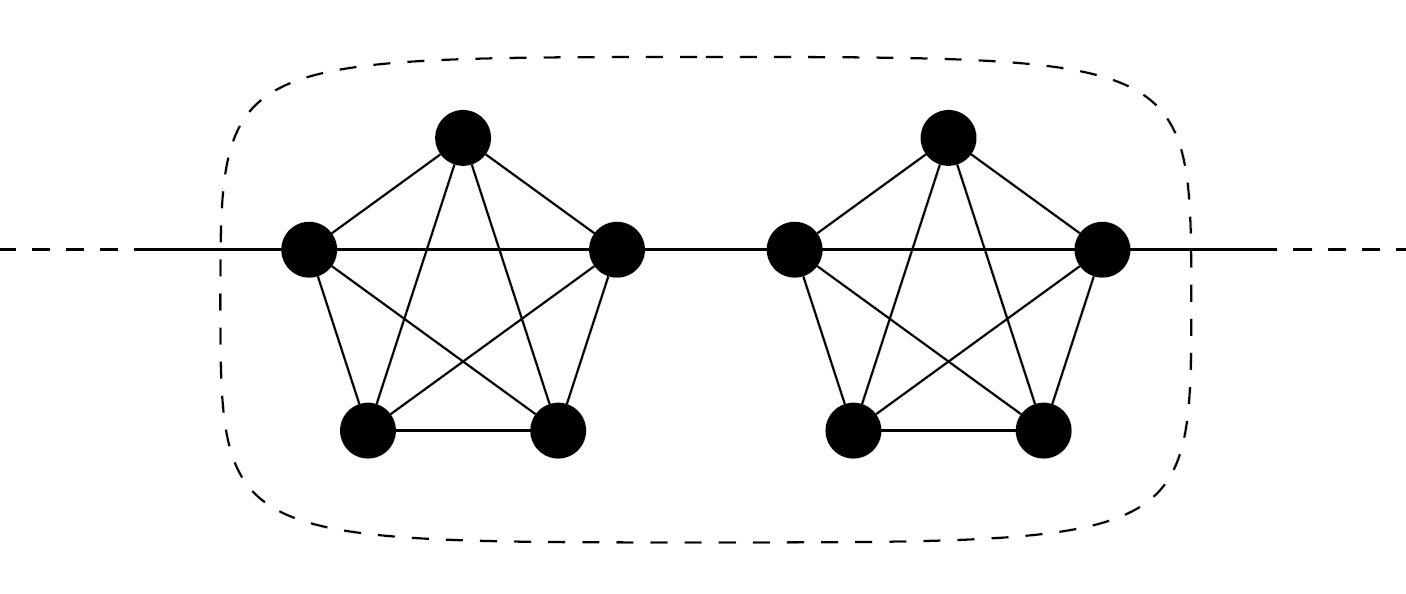}
 	\caption{Clusters in partition ${\Pi}_{1}$ (left) and in ${\Pi}_{2}$ (right).}
	\label{fig7} 
\end{figure}

A similar inequality for modularity function has been obtained in \cite{brandes2008}. In particular, under the same hypothesis -- i.e. considering the same partitions ${\Pi}_{1}$ and ${\Pi}_{2}$ -- the authors show that
${Q}_{\Pi_1}> {Q}_{\Pi_2}$ if $l<\sqrt{2m}$. 
We provide here an alternative proof. 
In the clustering ${\Pi}_{1}$, for each cluster $\mathcal{C}$,
$
{Q}_{\mathcal{C}}
=
\frac{k}{l(k+1)}-\frac{1}{l^2}
$
so that the total modularity is
$
{Q}_{{\Pi}_{1}}
=
l \left[ \frac{k}{l(k+1)}-\frac{1}{l^2} \right]
=
\frac{k}{k+1}-\frac{1}{l}.
$
In the second clustering ${\Pi}_{2}$, for each cluster $\mathcal{C}$, 
$
{Q}_{\mathcal{C}}
=
\frac{2k+1}{l(k+1)}-\frac{4}{l^2}
$
so that the total modularity is
$
{Q}_{{\Pi}_{2}}
=
\frac{l}{2} \left[ 	\frac{2k+1}{l(k+1)}-\frac{4}{l^2} \right]
=
\frac{2k+1}{2k+2}-\frac{2}{l}.
$
Therefore, the clustering ${\Pi}_{1}$ has higher modularity than the clustering ${\Pi}_{2}$, that is $Q_{{\Pi}_{1}}>Q_{{\Pi}_{2}}$, if
\begin{equation*}
\frac{k}{k+1}-\frac{1}{l}>\frac{2k+1}{2k+2}-\frac{2}{l}
\end{equation*}
solved for $l<2(k+1)$. Since $m=l(k+1)$, this condition equals $l<\frac{2m}{l}$, equivalent to $l<\sqrt{2m}$.

It is worth noting that this result is dependent on the number of cliques $l$ and their size $k$. The modularity is able to discriminate cliques containing $k$ arcs only if the number $l$ of cliques does not exceed $2(k+1)$, and it fails when the number of cliques becomes large enough compared to the number of arcs contained in each clique. 
By contrast, the total null-adjusted persistence is able to discriminate cliques in the network regardless of their number, and, in the end, regardless of the size $m$ of the network itself, overcoming the resolution limit typical of the modularity function.

\subsection{Merging clusters}
A critical consideration in community detection is evaluating the benefit of merging clusters. Merging is beneficial if it improves the objective function. In this section, we quantify the gain or cost, in terms of null-adjusted persistence, in merging two distinct clusters.
Let ${\mathcal{C}}_{1}$ and ${\mathcal{C}}_{2}$ be two clusters with internal arcs ${m_{i}^{(1)}}$ and ${m_{i}^{(2)}}$, external arcs ${m_{e}^{(1)}}$ and ${m_{e}^{(2)}}$, and ${m_{e}^{(12)}}$ arcs in between connecting them. The difference between the null-adjusted persistence of the merged cluster ${\mathcal{C}}$ and the sum of those of the two separate clusters ${\mathcal{C}}_{1}$ and ${\mathcal{C}}_{2}$,
$\Delta \persistence=\persistence-\left( \persistence_{1}+\persistence_{2} \right)$, quantifies the merging gain or cost, and allows us to provide a threshold above which the merging operation is convenient, as we show in the following:
\begin{proposition}
	\label{proposition5}
The null--adjusted persistence is increased by merging two clusters, i.e. $\Delta \persistence>0$, if and only if 
\begin{equation}
{m_{e}^{(12)}}>
\frac{2{m_{i}^{(2)}}+{m_{e}^{(2)}}}{2{m_{i}^{(1)}}+{m_{e}^{(1)}}}\, {m_{i}^{(1)}}+
\frac{2{m_{i}^{(1)}}+{m_{e}^{(1)}}}{2{m_{i}^{(2)}}+{m_{e}^{(2)}}}\, {m_{i}^{(2)}}.
\label{thresholdP}
\end{equation}
\end{proposition}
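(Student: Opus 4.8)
The plan is to reduce everything to Eq.~\eqref{definition3} and to exploit the fact that the \emph{degree sum} of a cluster, $2m_i+m_e=\sum_{i\in\mathcal C}k_i$, is additive under merging. First I would record the elementary bookkeeping for the merged cluster $\mathcal C=\mathcal C_1\cup\mathcal C_2$: its internal arcs are $m_i=m_i^{(1)}+m_i^{(2)}+m_e^{(12)}$ (the in-between arcs become internal), and its external arcs are $m_e=m_e^{(1)}+m_e^{(2)}-2m_e^{(12)}$ (each of the $m_e^{(12)}$ in-between arcs is counted once in $m_e^{(1)}$ and once in $m_e^{(2)}$). Adding these two relations gives the key identity
\[
2m_i+m_e=\bigl(2m_i^{(1)}+m_e^{(1)}\bigr)+\bigl(2m_i^{(2)}+m_e^{(2)}\bigr),
\]
so it is convenient to abbreviate $d_1=2m_i^{(1)}+m_e^{(1)}$ and $d_2=2m_i^{(2)}+m_e^{(2)}$, whence the degree sum of $\mathcal C$ is simply $d_1+d_2$.

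Next I would substitute into $\persistence=\dfrac{2m_i}{2m_i+m_e}-\dfrac{2m_i+m_e}{2m}$ and form $\Delta\persistence=\persistence-(\persistence_{1}+\persistence_{2})$. The three "null-model" (second) terms contribute $-\dfrac{d_1+d_2}{2m}+\dfrac{d_1}{2m}+\dfrac{d_2}{2m}=0$, so they cancel entirely and
\[
\Delta\persistence=\frac{2\bigl(m_i^{(1)}+m_i^{(2)}+m_e^{(12)}\bigr)}{d_1+d_2}-\frac{2m_i^{(1)}}{d_1}-\frac{2m_i^{(2)}}{d_2}.
\]
Since $G$ is connected and the clusters are nonempty, $d_1,d_2>0$, so multiplying through by the positive quantity $\tfrac{1}{2}d_1d_2(d_1+d_2)$ preserves the sign of the inequality $\Delta\persistence>0$. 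After expanding, the terms $d_1d_2\,m_i^{(1)}$ and $d_1d_2\,m_i^{(2)}$ cancel against their counterparts coming from $m_i^{(1)}d_2(d_1+d_2)$ and $m_i^{(2)}d_1(d_1+d_2)$, leaving
\[
\Delta\persistence>0\quad\Longleftrightarrow\quad d_1 d_2\, m_e^{(12)}>m_i^{(1)}d_2^{2}+m_i^{(2)}d_1^{2};
\]
dividing by $d_1d_2$ yields exactly the threshold \eqref{thresholdP}.

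The calculation is essentially routine once the additivity of the degree sum is in hand. The only steps needing care are the arc-counting — in particular, that $m_e^{(12)}$ is double-counted in $m_e^{(1)}+m_e^{(2)}$ and that the in-between arcs migrate into the internal count of $\mathcal C$ — and checking that the positivity of $d_1,d_2$, which licenses clearing the denominators without flipping the inequality, follows from the standing assumption that $G$ is connected. I would also remark that the vanishing of the null-model contributions is the structural reason the merging threshold depends only on the cluster statistics and not on the network size $m$, in contrast to the analogous modularity criterion.
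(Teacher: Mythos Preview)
Your proof is correct and follows essentially the same route as the paper: both set up the arc-count relations $m_i=m_i^{(1)}+m_i^{(2)}+m_e^{(12)}$ and $m_e=m_e^{(1)}+m_e^{(2)}-2m_e^{(12)}$, arrive at the same three-term expression for $\Delta\persistence$, and then solve the inequality for $m_e^{(12)}$. You supply more detail than the paper does---in particular, the explicit observation that the null-model terms cancel because $2m_i+m_e=d_1+d_2$, which is exactly why $m$ drops out---but this is elaboration rather than a different argument.
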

\begin{proof}
Let $m_{i}$ and $m_{e}$ be the internal and external arcs of the merged cluster ${\mathcal{C}}$, respectively. We have: $m_{i}={m_{i}^{(1)}}+{m_{i}^{(2)}}+{m_{e}^{(12)}}$ and $m_{e}={m_{e}^{(1)}}+{m_{e}^{(2)}}-2{m_{e}^{(12)}}$.
Therefore
\begin{equation*}
\begin{split}
\Delta \persistence &=\persistence-\left( \persistence_{1}+\persistence_{2} \right)\\
&=\frac{2\left( {m_{i}^{(1)}}+{m_{i}^{(2)}}+{m_{e}^{(12)}}\right)}
{\left( 2{m_{i}^{(1)}}+{m_{e}^{(1)}}\right)+\left( 2{m_{i}^{(2)}}+{m_{e}^{(2)}}\right)}-
\frac{2{m_{i}^{(1)}}}{2{m_{i}^{(1)}}+{m_{e}^{(1)}}}-
\frac{2{m_{i}^{(2)}}}{2{m_{i}^{(2)}}+{m_{e}^{(2)}}}
\end{split}
\end{equation*}
By solving $\Delta \persistence>0$ with respect to ${m_{e}^{(12)}}$ the inequality \eqref{thresholdP} immediately follows.
\end{proof}
Proposition \ref{proposition5} implicitly states that the threshold on the value of ${m_{e}^{(12)}}$ beyond which it becomes convenient to merge the two clusters does not depend on the size $m$ of the overall network. This further supports the scale invariance of the null-adjusted persistence. Notice that the same does not hold for modularity. In fact, if we {compute the merging cost for modularity, that is we solve $\Delta Q_{\mathcal{C}}=Q_{\mathcal{C}}-\left( Q_{{\mathcal{C}}_{1}}+Q_{{\mathcal{C}}_{2}} \right)>0$, we get
\begin{equation}
	{m_{e}^{(12)}}>
	\frac{ \left( 2{m_{i}^{(1)}}+{m_{e}^{(1)}} \right) \left( 2{m_{i}^{(2)}}+{m_{e}^{(2)}}\right) }{2m}
	\label{thresholdQ}
\end{equation}
which depends on the size $m$ of the network, confirming the scale dependence behavior. 

Fig. \ref{fig8} illustrates the result of Proposition \ref{proposition5}. In this case ${m_{i}^{(1)}}=4$, ${m_{i}^{(2)}}=3$, ${m_{e}^{(1)}}=7$, ${m_{e}^{(2)}}=6$ and ${m_{e}^{(12)}}=3$, and we would need at least ${m_{e}^{(12)}}=\lceil \frac{139}{20} \rceil = \lceil 6.95 \rceil =7$ arcs between  ${\mathcal{C}}_{1}$ and  ${\mathcal{C}}_{2}$ to conveniently merge them into a single cluster ${\mathcal{C}}$. Note that we cannot measure the merging cost for modularity without knowing the number of arcs in the network, $m$.
\begin{figure}[H]
	\centering
	\includegraphics[width=0.6\textwidth]{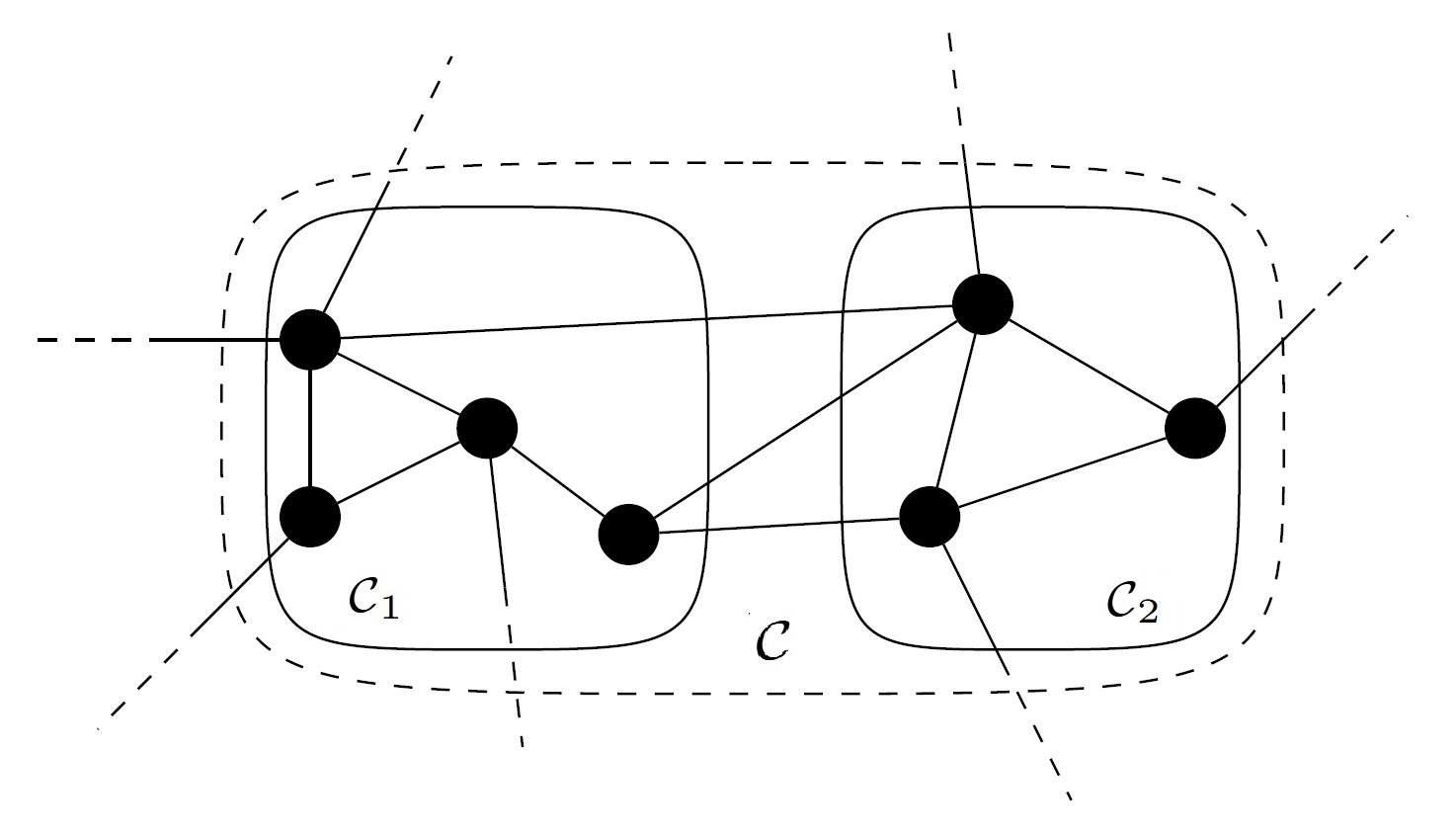}
	\caption{Illustrative example of the merging condition of two clusters.}
	\label{fig8} 
\end{figure}

Proposition \ref{proposition5} further justifies why null-adjusted persistence does not suffer from the same resolution limit as modularity. Indeed, under the hypotheses of Proposition \ref{resolution_prop}, $m_{i}=k$, $m_{e}=2$ for both cliques and $m=l(k+1)$, so that in the case of the null-adjusted persistence, ${m_{e}^{(12)}}>2k$. This shows how difficult it is to merge two cliques when using the null-adjusted persistence and how this becomes increasingly difficult as the size of the two cliques increases. On the contrary, using modularity, ${m_{e}^{(12)}}>\frac{2(k+1)}{l}$. The threshold \eqref{thresholdQ} for modularity becomes less than $1$ when $l>2(k+1)$ so that the presence of a single link between the two clusters is enough to make it convenient to merge them.

\section{Testing null-adjusted persistence optimization on benchmark and real networks}
\label{sec4}
In this section, we test the null-adjusted persistence as a quality function to detect communities first on two classes of simulated graphs and then on a real-world network. The scope is to assess its ability to catch the mesoscale structure, in particular, on classes of synthetic networks that provide controlled environments where the underlying community structure is known, and on ground-truth data, allowing rigorous testing of how well algorithms can identify and distinguish these communities.

As can be seen in the appendix, solving for maximum persistence through integer fractional programming is only viable for small graphs. Therefore, a heuristic algorithm must be devised for our tests. For comparison purposes, we adapted the classical Louvain algorithm for modularity (\cite{2008Blondel}) to the new objective function. Actually, the principle by which two clusters merge is the same for both methods; the only difference is the use of null-adjusted persistence instead of modularity. The detailed description of the proposed algorithm is reported in the \ref{appendixB}. The algorithm has been implemented in \textsc{C++} and the \proglang{R} package \pkg{persistence} has been developed for the computation of the null-adjusted persistence.

\subsection {Community detection on benchmark networks}

We begin by testing the null-adjusted persistence function on two different classes of benchmark networks. The first one is the class of the caveman graphs (see \cite{Watts1999}), progressively modified through random rewiring. These graphs are built from cliques -- fully connected subgraphs -- that represent tightly-knit communities or \textit{caves}, by shifting one edge from each clique and using it to connect to a neighboring one. These cliques are loosely connected through sparse inter-community arcs and, hence, exhibit clear and easily identifiable community boundaries, that are progressively hidden by rewiring. Caveman graphs are particularly useful for testing algorithms under idealized conditions where communities are densely connected internally but sparsely linked externally.
The second one is the class of simulated networks generated according to the methodology proposed by \cite{Lancichinetti2008}. This procedure generates networks that are as close as possible to real networks, which are often characterized
by a highly variable node degree. The Lancichinetti–Fortunato–Radicchi (LFR) networks provide a more complex and realistic scenario than caveman graphs. Designed to mimic the structure of real-world networks, LFR graphs feature power-law degree distributions and communities of varying sizes. A central element of these graphs is the mixing parameter $\mu$, which determines how many of a node's arcs connect to nodes outside its community.

For both classes of networks, we compare the performance of the null-adjusted persistence against the classical modularity function. We consider a range of conditions and evaluate the performance of the two methods using two well-known measures of partition similarity: the Adjusted Rand Index (ARI) and the Normalized Mutual Information (NMI) (\cite{Hubert1985,Danon2005}).
Both indices range from 0 to 1, where 0 indicates a random assignment of nodes to community, whereas 1 indicates a perfect match between the partitions.

Figure \ref{fig9} depicts results for caveman graphs and compares the indices in dependence on the edge rewiring. Specifically, the initial configuration -- the classical caveman structure -- is progressively modified according to a mechanism of rewiring consisting of moving a randomly selected link while preserving the degree distribution.  Panels (a) and (b) refer to a network of $60$ nodes, distributed into 12 communities, each of 5 nodes. Initially, both objective functions can intercept the community structure underlying the graph. Null-adjusted persistence maintains this capability well even after the rewiring mechanism has produced substantial link shuffling. Panels (c) and (d) refer to a network of $120$ nodes distributed into 24 communities, each of 5 nodes. In this case, the modularity function fails in identifying the initial community structure, as the number of communities overcomes the threshold 
\eqref{thresholdQ}, and it becomes convenient to merge adjacent clusters.
Conversely, the null-adjusted persistence is still able to identify the community structure, as was theoretically predicted in Proposition \ref{resolution_prop}.

\begin{figure}[H]
	\centering
	\subfloat[]{\includegraphics[width=0.45\textwidth]{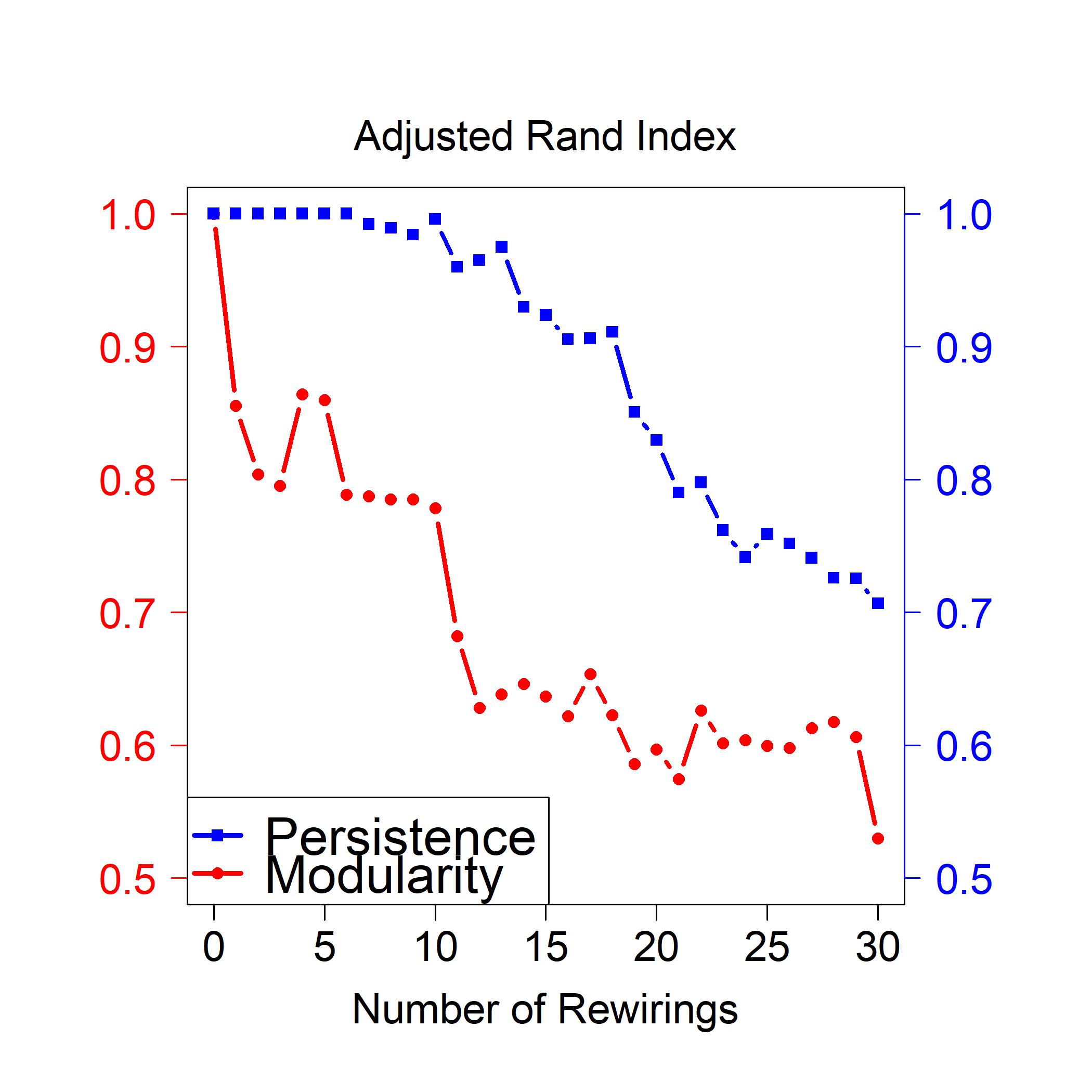}}
	\subfloat[]{\includegraphics[width=0.45\textwidth]{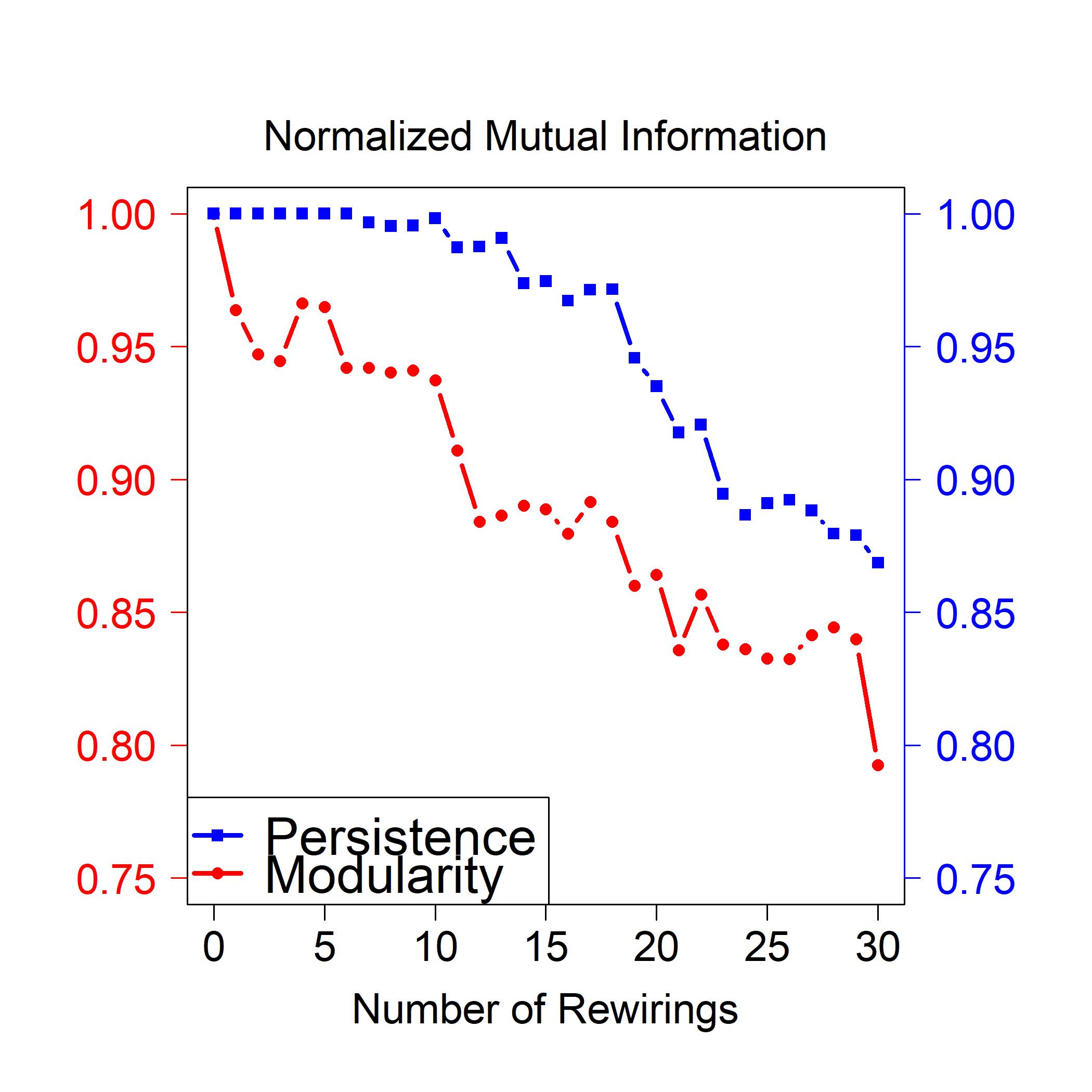}}\\
	\subfloat[]{\includegraphics[width=0.45\textwidth]{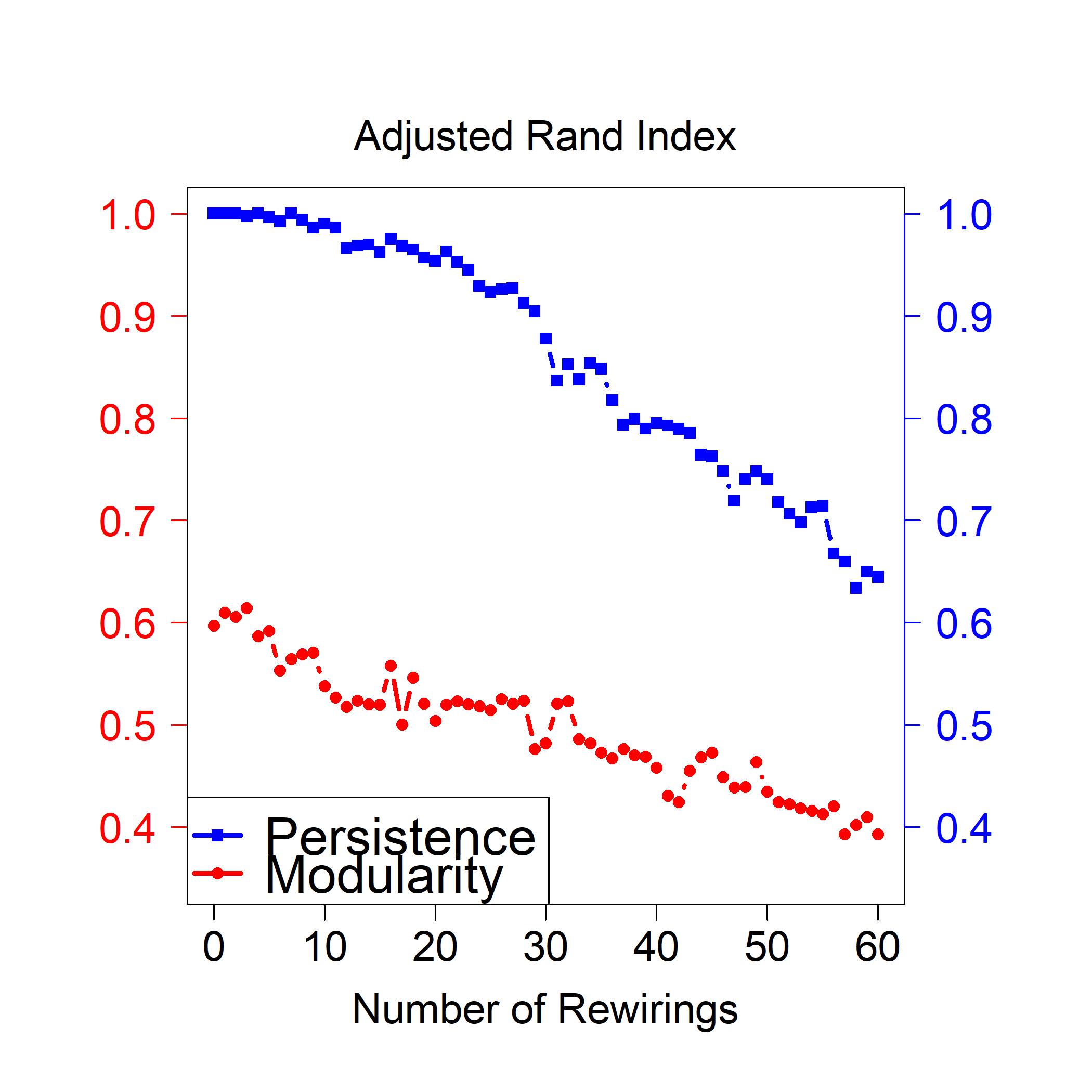}}
	\subfloat[]{\includegraphics[width=0.45\textwidth]{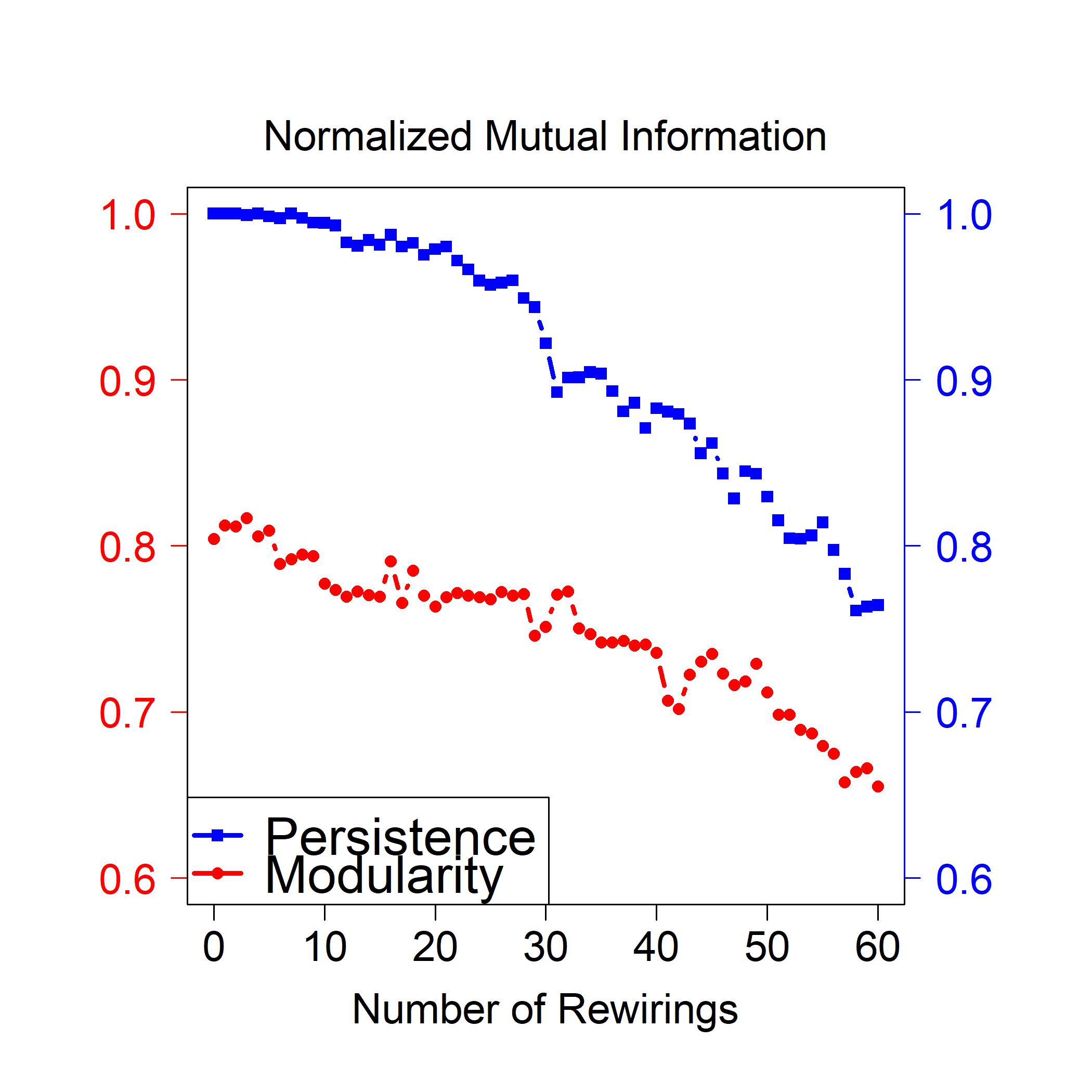}}\\
	\caption{Adjusted Rand Index and Normalized Mutual Information for caveman graphs with 12 cliques of 5 nodes (panels (a) and (b)) and 24 cliques of 5 nodes (panels (c)and (d))}
	\label{fig9} 
\end{figure}

Increasing the network size, we conducted a similar study 
on LFR graphs with $n=1000$ nodes.  In Table \ref{Table1} we list the results for LFR graphs whose degree and community size power law distributions have exponents $\tau_{1}=2$ and $\tau_{2}=2$, respectively. 
As can be seen, both the ARI and the NMR values for $\mathcal{P}^{\star}$ are slightly higher than for $\mathcal{Q}$,
confirming the ability of the null-adjusted persistence to capture the community structure, for both average degrees $10$ and $15$. In all cases, a significant independence of the true partition detection capability from the value of $\mu$ emerges.

\begin{table}[h]
	\begin{center}
\begin{tabular}{c|cccc|cccc|}
\cline{2-9}
 & \multicolumn{4}{c|}{ARI} & \multicolumn{4}{c|}{NMR} \\
\cline{2-9}
 & \multicolumn{2}{c|}{Av Deg 10} & \multicolumn{2}{c|}{Av Deg 15}
 & \multicolumn{2}{c|}{Av Deg 10} & \multicolumn{2}{c|}{Av Deg 15} \\
\hline
$\mu$ & $\mathcal{Q}$ & ${\mathcal{P}}^{\star}$ & $\mathcal{Q}$  & ${\mathcal{P}}^{\star}$ & $\mathcal{Q}$ & ${\mathcal{P}}^{\star}$ & $\mathcal{Q}$ & ${\mathcal{P}}^{\star}$ \\
\hline
$0.1$ & $0.985888$ & $0.998525$ & $0.998935$ & $0.999773$ &
$0.995314$ & $0.999243$  & $0.999610$ & $0.999913$ \\
\hline
$0.2$ & $0.984673$ & $0.997420$ & $0.998819$ & $0.999634$ &
$0.994874$ & $0.998991$ & $0.999574$ & $0.999848$ \\
\hline
$0.3$ & $0.984961$ & $0.998289$ & $0.998834$ & $1$ &
$0.994985$ & $0.999297$ & $0.999572$ & $1$ \\
\hline
$0.4$ & $0.985188$ & $0.997929$ & $0.998585$ & $0.999799$ &
$0.995129$ & $0.999136$ & $0.999489$ & $0.999898$ \\
\hline
$0.5$ & $0.985921$ & $0.997789$ & $0.999233$ & $0.999839$ &
$0.995325$ & $0.999116$ & $0.999699$ & $0.999929$ \\
\hline
$0.6$ & $0.983814$ & $0.998475$ & $0.998730$ & $1$ &
$0.994607$ & $0.999352$ & $0.999530$ & $1$ \\
\hline
\end{tabular}
\end{center}
\caption{Adjusted Rand Index (ARI) and Normalized Mutual Information (NMI) of the optimal partition generated by modularity $\mathcal{Q}$ and null-adjusted persistence ${\mathcal{P}}^{\star}$ on the LFR graphs generated with $\tau_{1}=2$, $\tau_{2}=2$ and average degree (Av Deg) equal to $10$ and $15$.}
\label{Table1}
\end{table}



\subsection {Application to a real network}
We apply the persistence-based community detection method to a real-world network and compare our proposal with the modularity-based results. The scope is to assess how well the two functions can discover the ground-truth structure of the data, and to highlight the differences between the partitions they produce.
We refer to the social network described in \cite{Leskovec2012}. 
The network consists of the merge of ten ego-networks\footnote{An ego-network is a subgraph consisting of a focal node (ego), its directly connected neighbors (alters), and all edges between these alters.} of friendship relationships in Facebook, and contains $4039$ nodes and $88234$ connections. Each node represents an anonymized Facebook user from one of the ten friends' lists. Each edge corresponds to a friendship between two Facebook users. The network is undirected because edges in Facebook encode only reciprocal ties.\footnote{The dataset is available at this link: \href{https://snap.stanford.edu/data/ego-Facebook.html}{Stanford Facebook Dataset}.} The ego nodes are listed in Table \ref{Table2}. They typically have a high degree and aggregate their own friend lists around themselves, giving the network a coarse mesoscopic structure that could be used as a first level ground-truth community structure. However, the global network is much more complex than the mere aggregation of the ten ego-networks and, therefore, requires an adequate community search methodology.

Consistent with what was shown in Section \ref{sec3}, we expect that different choices of the objective function will produce different results in terms of community structure. Indeed, the two objective functions produce very different partitions: the Louvain algorithm that maximizes modularity produces a partition into $17$ communities ($M$-communities), while the algorithm that maximizes null-adjusted persistence finds a partition into $166$ communities ($P$-communities). To better compare the different findings, we refer to Table \ref{Table2}. The first and second columns list the ego nodes and their degree in the Facebook network. The third and fourth columns list the $17$ communities according to the Louvain algorithm for modularity ($M$-communities) and the number of nodes in each community. In the next column, we list the number of communities into which the null-adjusted persistence splits each of the $M$-communities. In the remaining three columns, we list the communities according to the null-adjusted persistence ($P$-communities) to which the ego node belongs, their size, and the percentage reduction of this size with respect to the $M$-community to which the same node belongs.
The last rows report the $M$-communities that do not contain any ego node.

\begin{table}[h]
	\begin{center}
		\begin{tabular}{|c|c||c|c||c||c|c||c|}
			\hline \hline
			Egonode	& Degree    & M-community & $\#$ Nodes  & Splitting  & P-community  & $\#$ Nodes  & Reduction  \tabularnewline \hline
			$\#1$               & $347$   &	1  & $341$ & $28$ &  $21$  & $176$  &  $48.39\%$  \tabularnewline \hline
			$\#349$          	& $229$   &	\multirow{2}{*}{2} & 	\multirow{2}{*}{$395$}  & \multirow{2}{*}{$17$}  &  $42$    &  $187$    & $52.66\%$ \tabularnewline \cline{1-2} \cline{6-8}
			$\#415$        	    & $159$   &	   &       &      &  $39$ & $61$    & $84.56\%$ \tabularnewline \hline
			$\#1685$  	        & $792$   &	3 & $561$ & $24$ &  $115$   & $199$    & $64.53\%$  \tabularnewline \hline
			$\#108$             & $1045$  &	4  & $428$ & $27$ & $69$  &  $347$   &  $18.93\%$ \tabularnewline \hline
			$\#1913$	        & $755$   &	5 & $423$ & $13$ & $98$   &  $278$   & $34.28\%$  \tabularnewline \hline
			-	                & -       &	6  & $25$  & $1$  & -  &  -  &  - \tabularnewline \hline
			$\#3981$	        & $59$    &	7 & $60$  & $7$  &  $160$  &  $35$   &  $41.67\%$ \tabularnewline \hline
			$\#687$          	& $170$   &	\multirow{2}{*}{8} & 	\multirow{2}{*}{$206$}  & 	\multirow{2}{*}{$11$}  & \multirow{2}{*}{$46$} & \multirow{2}{*}{133} &  \multirow{2}{*}{$35.44\%$} \tabularnewline \cline{1-2}
			$\#699$         	& $68$    &	   &       &      &    &     &    \tabularnewline \hline
			$\#3438$	        & $547$   &	9 & $548$ & $31$ &   $148$   &  $118$   & $78.47\%$ 	 \tabularnewline \hline
			-                   & -       &	10  & $386$ & $14$ & -  &  -  &  - \tabularnewline \hline
			-                   & -       &	11  & $54$  & $2$  & -  &  -  &  - \tabularnewline \hline
			
			-                   & -       &	12  & $38$  & $1$  & -  &     & - \tabularnewline \hline
			
			-	                & -       &	13  & $73$  & $1$  & -  &  -  &  - \tabularnewline \hline
			-                   & -       &	14 & $237$ & $3$  & -  &  -  & - \tabularnewline \hline
		
			-                   & -       &	15 & $19$ & $1$  & -  &  -  & - \tabularnewline \hline
			-                   & -       &	16 & $226$  & $5$  & -  &  -  & - \tabularnewline \hline
			-                   & -       &	17 & $19$  & $1$  & -  &  -  & - \tabularnewline \hline			
		\end{tabular}
	\end{center}
	\caption{Community structure of the Facebook network.}
	\label{Table2}
\end{table}
       
Facebook network and its partitions into $M$-communities and $P$-communities are shown in Fig. \ref{facebook}.

\begin{figure}[H]
	\centering
	\subfloat[]{\includegraphics[width=0.70\textwidth]{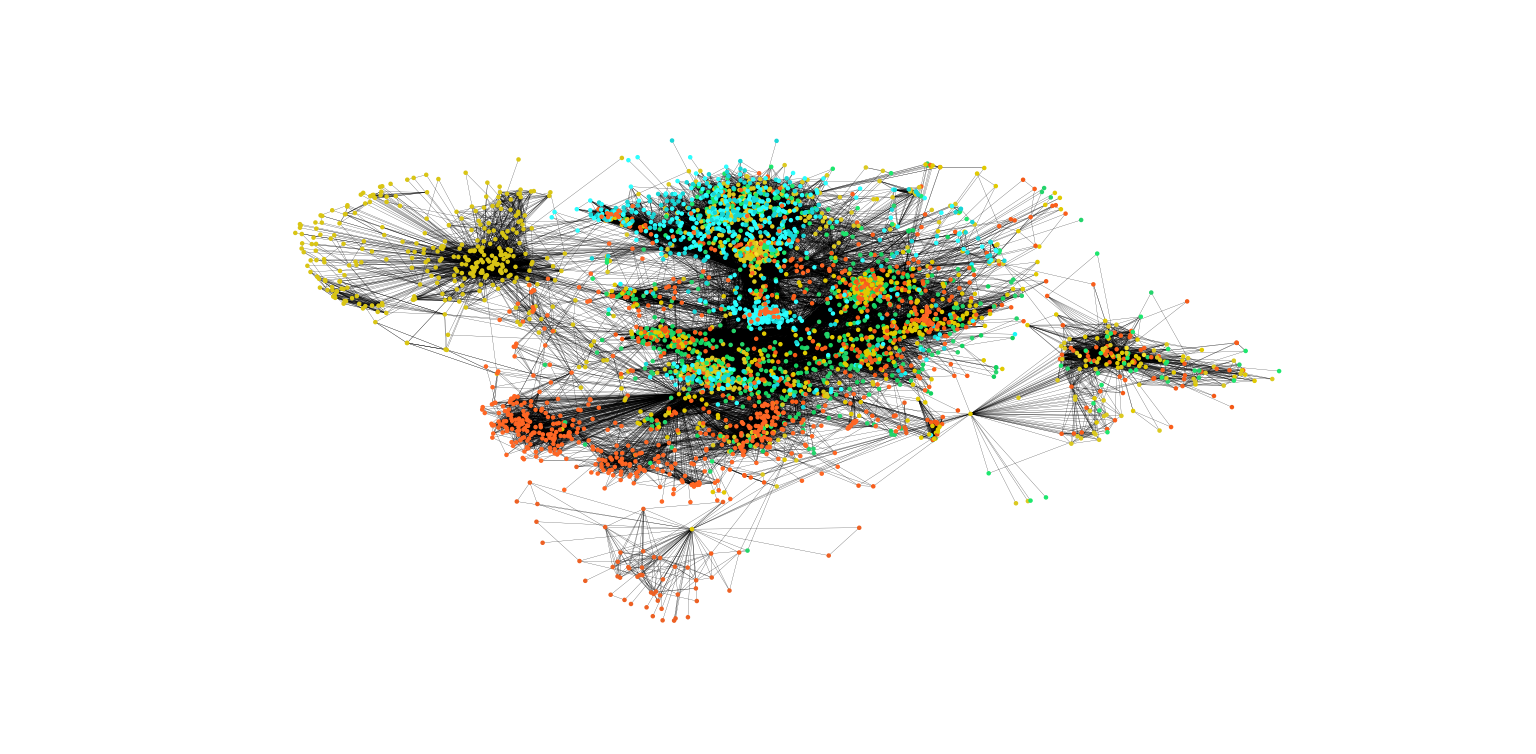}}\\
	\subfloat[]{\includegraphics[width=0.70\textwidth]{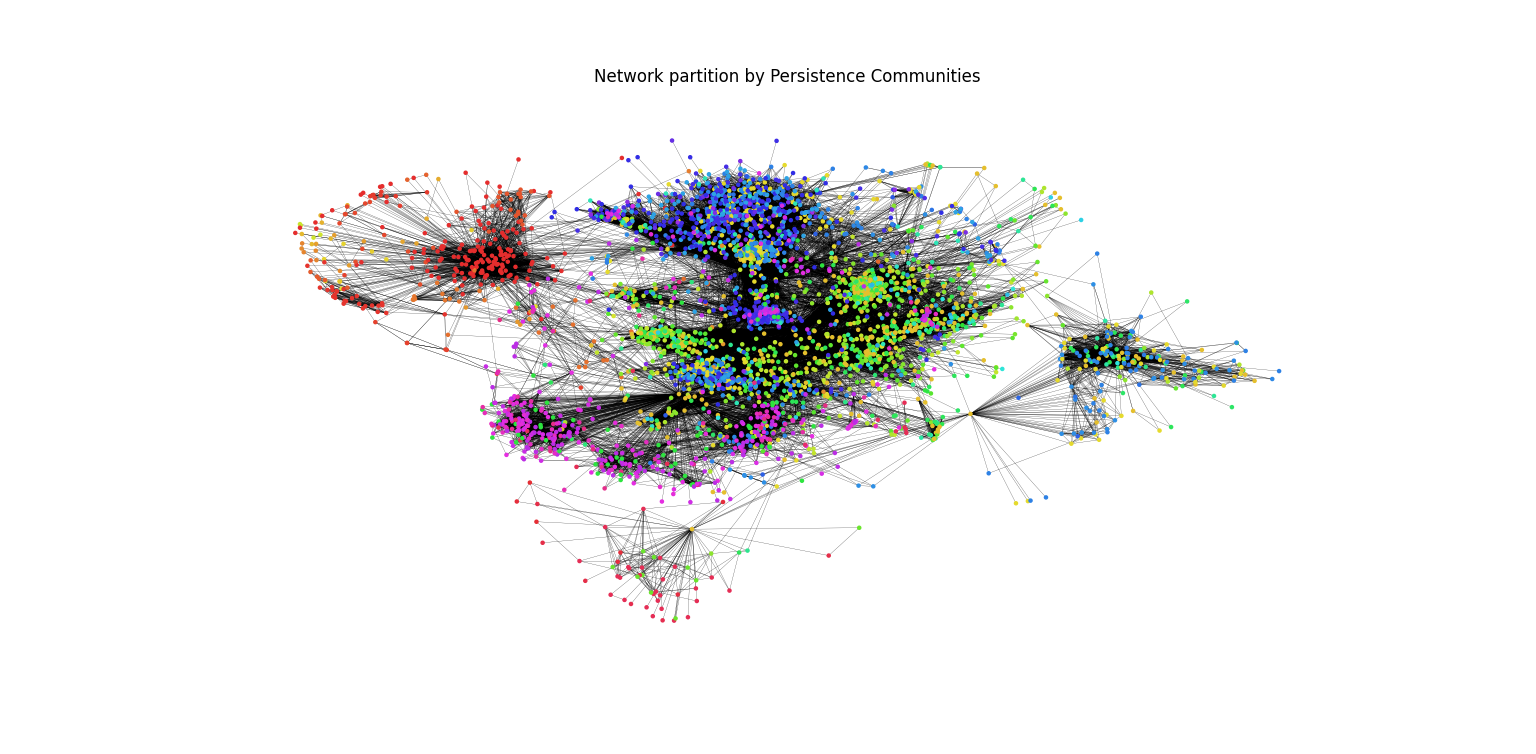}}
	\caption{Partition of the Facebook network into $M$-communities, panel (a), and into $P$-communities, panel (b).}
	\label{facebook} 
\end{figure}

The null-adjusted persistence returns $166$ communities with $2$ to $347$ nodes, with $61$ communities having more than $10$ nodes, $34$ communities having more than $20$ nodes, and $11$ communities having more than $100$ nodes ($P$-communities 21, 42, 46, 58, 69, 98, 105, 113, 115, 129 and 148).
By Table \ref{Table2}, we preliminarily observe that, in most cases, ego nodes tend to identify a self-community, and both objective functions are able to catch this. However, almost all $M$-communities split into a bunch of $P$-communities. Typically, $P$-communities realize an additional partition within the $M$-community. For example, $M$-community 9 is divided into 31 $P$-communities, the largest of which are four communities with 118, 82, 68 and 58 nodes, respectively, plus other communities with a few dozen nodes each. 
This is represented in Figure \ref{splittingcommunities}, panel (a): almost the totality of the communities found by the $M$-partition is fragmented into smaller communities by optimizing the null-adjusted persistence. 
The null-adjusted persistence catches the same nodes' relationships as the modularity does, but at a more refined level. Moreover, there are a few $P$-communities that contain nodes from different $M$-communities. They are only $14$ out of $166$, specifically:
13 (from 1 and 3),
39 (from 2 and 4),
64 (from 2, 4 and 10), 
69 (from 2, 4 and 10),
73 (from 2, 3, 9, 10 and 11),
77 (from 3, 4 and 10),
80 (from 4 and 10),
84 (from 2 and 4),
90 (from 4 and 10),
98 (from 5 and 14),
99 (from 14 and 15),
111 (from 3, 10 and 11),
115 (from 3 and 16),
165 (from 2 and 7).
These communities are represented in Fig. \ref{splittingcommunities}, panel (b).
The largest $P$-community is cluster $69$ with $347$ elements. It contains the ego node $\#108$ and collects nodes from $M$-communities $2$, $4$ and $10$. The other $P$-communities containing ego nodes, that is nodes $\#1,\ \#349,\ \#415,\ \#687,\ \#699,\ \#1685,\ \#1913,\ \#3438,\ \#3981$, are 21, 42, 39, 46, 46, 115, 98, 148 and 160, respectively.

It is worth noting that null-adjusted persistence not only breaks precisely those $M$ communities that contain ego nodes, but interestingly also leaves those that do not almost unchanged. Five of the $M$-communities that do not contain ego nodes do not undergo any change at all when analyzed by the null-adjusted persistence. For example, $M$-community 14 contains $19$ nodes with degree greater than $200$, but none of them is an ego node and null-adjusted persistence recognizes it almost identically in the $P$-community 105 with $231$ nodes (plus $4$ nodes in $P$-community 98 and $2$ nodes in $P$-community 99).\footnote{Consider that in the whole network there are 40 nodes with degree higher than $200$, so higher than the degree of four ego nodes.}
Conversely, communities that contain ego nodes, e.g., $M$-community 1 containing node $\#1$ or L-community 3 containing node $\#1685$, are broken. Peculiarly, they contain no other nodes with degree greater than $200$. Only node $\#108$, which is the most central node in the network in term of degree, seems to be able to attract other nodes, specifically 14 other nodes, with degree greater than $200$. 
Remarkably, persistence is able to break the two ego-networks of nodes $\#349$ and $\#415$, which are adjacent in the whole network.
Modularity cannot resolve these two ego-networks and merges them into a single community (the $M-$community 2), as they are quite deeply nested into the global network. This does not happen for the two most peripheral nodes $\#687$ and $\#699$, which are also adjacent nodes but are assigned to a single community by both modularity and null-adjusted persistence. Therefore, $P$-communities seem to provide a more refined representation of the relationships' structure on this network, by preserving in any case the central role of the ego nodes. Null-adjusted persistence is better than modularity at recognizing nested communities that have independent origins as distinct ego-networks.
\begin{figure}[H]
	\centering
        \subfloat[]{\includegraphics[width=0.45\textwidth]{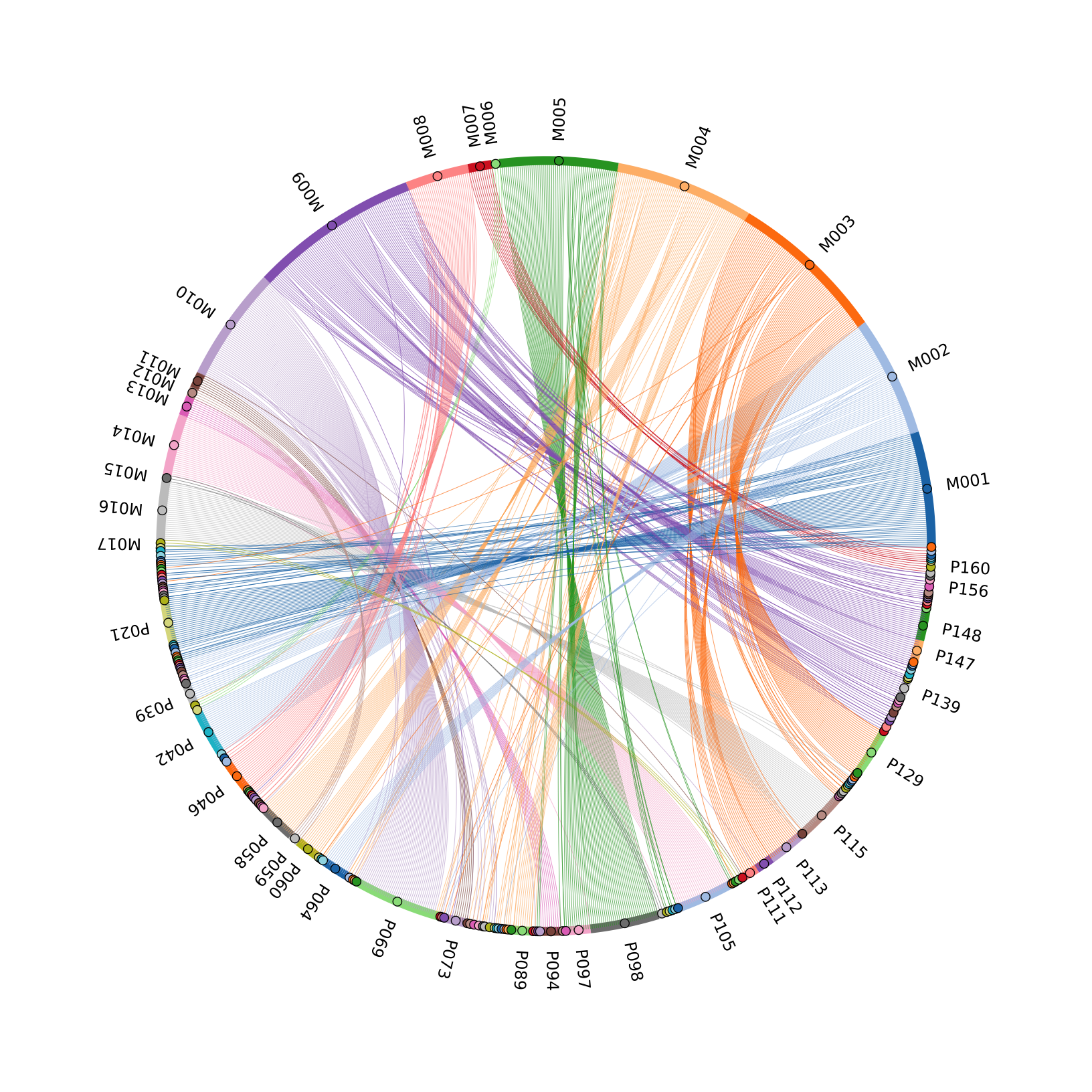}\label{splittingcommunitiesv1}}\\
	\subfloat[]{\includegraphics[width=0.45\textwidth]     {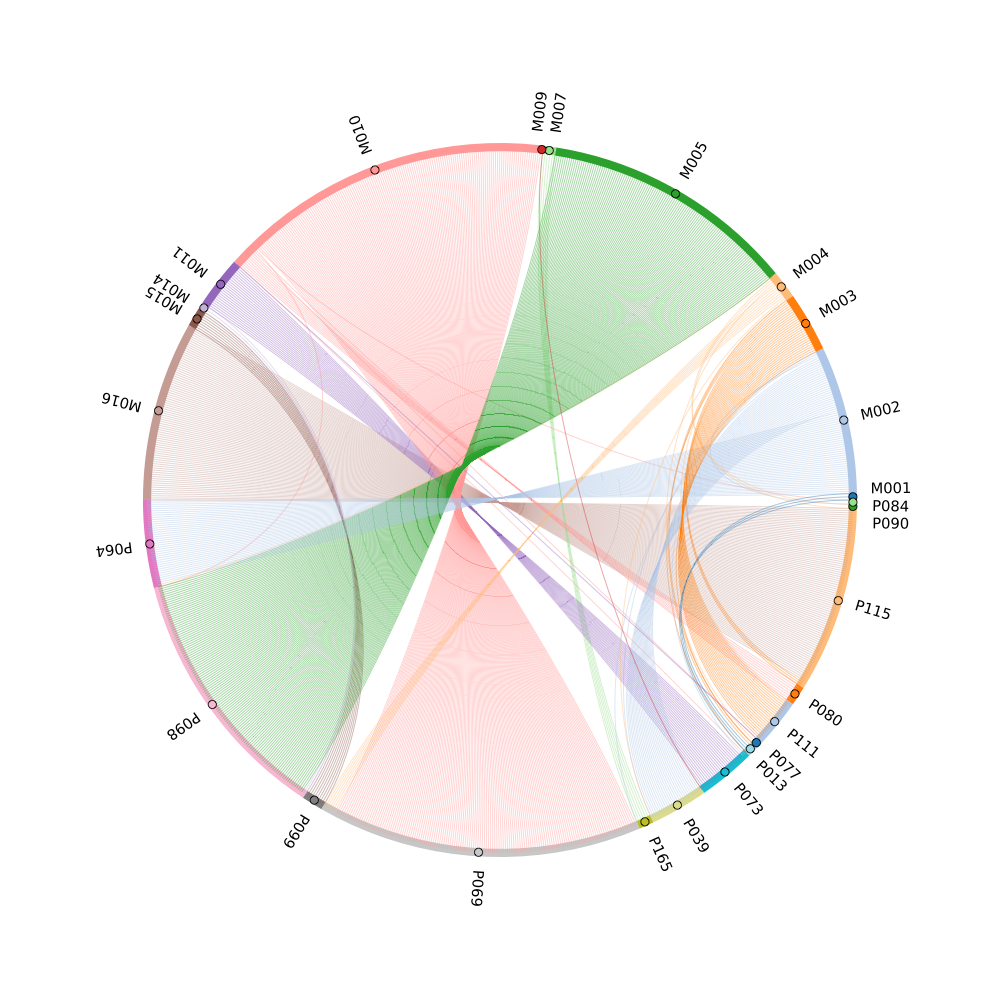}\label{splittingcommunitiesv2}}

	\caption{Splittig M-communities into P-communities}
	\label{splittingcommunities} 
\end{figure}

\section{Conclusion}
This paper relates the concept of persistence probability to the community structure of a network. We introduce the null-adjusted persistence, which implies a comparison of persistence probability with a null model, and we adopt it as the objective function of the optimization problem for community detection. This proposal incorporates features from persistence probability and modularity, offering significant advantages over both these functions.
Indeed, it aligns with modularity-based approaches by separating observed persistence into contributions from the null model and deviations from it, thereby improving interpretability and integration with other network analysis methods. Its ability to take both positive and negative values allows to distinguish cohesive clusters from those that are less cohesive than expected. This feature is particularly valuable in networks with strong degree heterogeneity, where this heterogeneity could otherwise produce misleading results. It also allows to detect partitions with a higher resolution than modularity in large networks consisting of many medium-small communities. 

A first direction of future research is to develop more refined and optimized heuristics to search for such communities. Once developed suitable algorithms, it will be possible to test the potential of null-adjusted persistence on large real networks and highlight its usefulness in improving the robustness and applicability of community detection methods.

\appendix

\section{Mixed Integer Linear Programming Formulation of the Optimal Persistence Partition}
\label{appendixA}
In the following, we report the formulation of problem \eqref{maxpersistence} through a mixed integer linear programming. Note that from proposition (\ref{proposition1}) for any partition, the difference between the the null-adjusted persistence and the persistence probability is a constant, therefore, any maximizer of the latter is a maximizer of the former too.
Suppose that $V$ can be partitioned into $k = 1, \ldots, n$ slots (slots represent clusters, some of them possibly empty for numerical convenience), then let the decision variable $z_{ik} = 1$ if node $i$ is assigned to slot $k$, 0 otherwise. The following objective function represents the persistence of a node-to-slot assignment that has to be maximized under the set of binary variables $z$:
\begin{equation}
\label{milp:fo}
{\max}_{z} \sum_{k=1}^n \frac{\sum_{i = 1}^{n-1}\sum_{j = i+1}^n 2a_{ij}(z_{ik}z_{jk})}{\sum_{i = 1}^{n-1}\sum_{j = i+1}^n [a_{ij}(z_{ik}z_{jk}) + a_{ij}\max\{z_{ik},z_{jk}\}]}
\end{equation}

In the objective function, the product $(z_{ik}z_{jk})$ is 1 if and only if both $i$ and $j$ are assigned to the same slot $k$, and therefore arc $(i,j)$ is internal to the cluster represented by the slot $k$. The term $\max\{z_{ik},z_{jk}\}$ is 1 if at least one between $i$ and $j$ is internal to the slot $k$, and therefore either arc $(i,j)$ is internal to the slot $k$, or is in the cut from the slot $k$ to another slot. The numerator expresses the number of arcs internal to a cluster, counted twice; the denominator expresses the number of arcs internal to a cluster, plus the number of the arcs exiting the cluster. The ratios of the objective function are written with the convention $0/0 = 0$. 

A partition $\Pi$ can be represented by many node-to-slot assignments, simply relabeling the slot containing a given cluster. Therefore some constraints must be imposed to avoid multiple symmetric solutions (as they would increase the computational times exponentially). These constraints are:
\begin{equation}
\label{milp:c1}
\begin{split}
& \sum_{k = i}^n z_{ik} = 1\, \text{ for all }i \in V\\
& z_{ik} \le z_{kk} \, \text{ for all }i \in V, k > i.
\end{split}
\end{equation}

The first kind of constraint requires that every node must be assigned to one slot, and that the index of the slot must be greater than or equal to the node index. The second kind of constraint imposes that node $i$ can be assigned to a slot $k$ only if $k$ is not empty and node $k$ has been assigned to slot $k$. Taken together, the two constraints impose that for a cluster ${\mathcal{C}}_\alpha = \{i_1,\ldots, i_r\}$, 
the bin that contains ${\mathcal{C}}_\alpha$ can only be $k = i_r$. 
These constraints are important for numerical purposes to avoid symmetric solutions in branch\&bound. They were previously used in \cite{Benati2017, Temprano2024}.
Note that the optimal solution of \eqref{milp:fo} is made up of connected clusters. Indeed, by absurd, if the optimal solution contained at least one bin representing a unconnected cluster, then it could be split into at least two connected components, improving the objective function, and thus contradicting its optimality.

The problem is not linear. However, it can be turn into a mixed integer linear programming with the appropriate constraints and linearization of the quadratic terms.
The product terms of the objective function can be linearized as:
\begin{equation*}
x_{ijk} = z_{ik} z_{jk} \Longleftrightarrow 
	\begin{cases}
		x_{ijk} \le z_{ik} \\
		x_{ijk} \le z_{jk} \\
		x_{ijk} \ge z_{ik} + z_{jk} - 1
		\end{cases}
		\hspace{3mm} \forall i,j\in V.
\end{equation*}
The $\max$ term of the objective function can be linearized as:	
\begin{equation*}     
	y_{ijk} = \max\{z_{ik}, z_{jk}\} \Longleftrightarrow
	\begin{cases}
		y_{ijk} \ge z_{ik} \\
		y_{ijk} \ge z_{jk} \\
		y_{ijk} \le z_{ik} + z_{jk}
	\end{cases}
		\hspace{3mm} \forall i,j\in V.
\end{equation*}
The objective function now reads:
\begin{equation*}  
{\max}_{x,y} \sum_{k=1}^n \frac{\sum_{i = 1}^{n-1}\sum_{j = i+1}^n 2a_{ij} x_{ijk}}{\sum_{i = 1}^{n-1}\sum_{j = i+1}^n [a_{ij}x_{ijk} + a_{ij}y_{ijk}]}.
\end{equation*}
Note that the objective function is increasing with respect to variables $x_{ijk}$ and decreasing with respect to $y_{ijk}$, therefore some of the above linearization terms are not necessary to characterize the optimal solution. 
	
Now the objective function is the sum of ratios between two linear terms, then it can be linearized using the Charnes-Cooper linearization. Introduce a new variable $u_k$, defined as:
\begin{equation*}  
u_k = \frac{1}{\sum_{i = 1}^{n-1}\sum_{j = i+1}^n [a_{ij}x_{ijk} + a_{ij}y_{ijk}]}
\end{equation*}
so we obtain the constraint:
\begin{equation*}  
\sum_{i = 1}^{n-1}\sum_{j = i+1}^n [a_{ij}x_{ijk}u_k + a_{ij}y_{ijk}u_k] = 1.
\end{equation*}

The constraint above is necessary only on the condition that bin $k$ is non-empty, otherwise it must be 0 to respect the convention $0/0 = 0$. Considering that, from the anti-symmetric constraints, a slot $k$ is non empty if and only if $z_{kk} = 1$, we have that the above condition is extended to all $k$ with:
\begin{equation*}  
\sum_{i = 1}^{n-1}\sum_{j = i+1}^n [a_{ij}x_{ijk}u_k + a_{ij}y_{ijk}u_k] = z_{kk} \, \mbox{ for all $k$}.
\end{equation*}
Quadratic terms can be linearized:
\begin{equation*}         
	x_{ijk}u_k = w_{ijk} \Longleftrightarrow
	\begin{cases}
		w_{ijk} \le u_k \\
		w_{ijk} \le x_{ijk} \\
		w_{ijk} \ge u_k - (1-x_{ijk})
	\end{cases}
		\hspace{3mm} \forall i,j\in V.
\end{equation*}
and similarly:
\begin{equation*}        
	y_{ijk}u_k = q_{ijk} \Longleftrightarrow
	\begin{cases}
		q_{ijk} \le u_k \\
		q_{ijk} \le y_{ijk} \\
		q_{ijk} \ge u_k - (1-y_{ijk})
	\end{cases}
		\hspace{3mm} \forall i,j\in V.
\end{equation*}
and now the objective function is:
\begin{equation}\label{milp:fo1}
{\max} \sum_{k=1}^n \sum_{i = 1}^{n-1}\sum_{j = i+1}^n 2a_{ij} w_{ijk}.
\end{equation}

Combining the objective function \ref{milp:fo1} with the constraints representing the quadratic terms and the antisymmetric conditions, we obtain a MILP that can be solved by any Integer Linear Programming solver for instances of moderate size, such as the networks arising from opinion surveys, see \cite{benati2024}. 

In the following test, we used Gurobi 11.02 in the R/RStudio environment. First, we simulate a Caveman graph, see \cite{Watts1999}, made up of cliques of five nodes. Then, cliques are multiplied by 3, 4 and 5, to obtain graphs of 15, 20 and 25 nodes. Caveman graphs have a clear community structure, making them the easiest to solve because there are not many competing solutions. Next, caveman graphs are randomly rewired to hide the community structure and make instances harder to solve.   

The computational times are reported in Table \ref{tab:MILP}, expressed in seconds. Assuming a time limit of 3,600 seconds, we observe that instances of 20 nodes can be solved in a few minutes if the graph has a recognizable community structure. However, the time required increases significantly as the number of nodes increases. Instances of 25 nodes are solved within an hour, but after that size, that is between 26 and 30 nodes, the time limit is often exceeded. It is worth noting that the optimization problem \ref{milp:fo} is similar to the min-cut density clustering analyzed in \cite{Temprano2024}: both optimization functions are the sum of ratios, but the direction of the optimization is reversed, maximization vs minimization. The computational times of the two models are similar, but in \cite{Temprano2024} it has been shown that column generation with branch\&cut can improve the computational times, and therefore it is an interesting direction of new research.
\begin{table}[h]
    \centering
    \begin{tabular}{|c|c|c|c|c|}
        \hline
        nodes & Caveman & \multicolumn{3}{c|}{Caveman rewired} \\ \hline
              &        & min & average & max  \\ \hline
        15    & 4.9   & 6.5  & 9.3   & 13.6 \\ \hline
        20    & 57.5   & 63.6 & 212.4   & 1140.3 \\ \hline
        25    & 1069.1   & 1450.2      & 2022.8  & 2952.1 \\ \hline
    \end{tabular}
    \caption{Computational times (seconds) of the Mixed Integer Linear Programming.}
    \label{tab:MILP}
\end{table}

\section{Louvain-based algorithm}
\label{appendixB}

We describe in detail the algorithm used for simulations on synthetic and real-world networks in Section \ref{sec4}.  The algorithm falls into the category of standard "greedy" optimization algorithms, and it follows an approach similar to that of the Louvain method.

\begin{algorithm}[h]
    \caption{Milan: A Louvain-based algorithm for persistence}\label{alg:louvain:persistence}
    \DontPrintSemicolon
    \SetKwFunction{Move}{Move}
    \SetKwData{Left}{${\Pi}^{\star}$}
    \KwIn{a network $G=(V,E)$.}
    \KwResult{A partition ${\Pi}^{\star}$ of $V$ in communities.}
    \BlankLine

    ${\Pi}^{\star}\gets\{\{1\},\ldots,\{\lvert V\rvert\}\}$\;\label{alg:start}
    \While{\texttt{True}} {
        ${\Pi}^{\prime}\gets \Move{G, \Left}$ \;
        \If{${\Pi}^{\prime}={\Pi}^{\star}$}{
            \Break \;
        }
        ${\Pi}^{\star}\gets {\Pi}^{\prime}$\;
    }

    \setcounter{AlgoLine}{0}
    \BlankLine
    \BlankLine
    \SetKwProg{Fn}{}{}{}
    \SetKwFunction{Move}{Move}   
    \Fn(){\Move{$G$, $\{\mathcal{C}_{1}, \ldots, \mathcal{C}_{q}\}$}}
    {
        \KwIn{$G=(V,E)$ and $\{\mathcal{C}_{1}, \ldots, \mathcal{C}_{q}\}$ a partition of $G$ in communities.}
        \KwResult{a $q$-connected subset of $V$.}
        \BlankLine
        $\Pi\gets \{\{\mathcal{C}_{1}\}, \ldots, \{\mathcal{C}_{q}\}\}$ \;\label{move:start}
        \While{\texttt{True}}{
            \ForEach{$\mathcal{C} \in \{\mathcal{C}_{1}, \ldots, \mathcal{C}_{q}\}$} 
            {
                let $\mathcal{C}^{\prime}\in \Pi$ the  community with largest increase in null-adjusted persistence $\Delta \mathcal{P}^*$ when $\mathcal{C}$ and $\mathcal{C}^{\prime}$ are merged. \; \label{move:best:delta}
                
                \If{$\Delta \mathcal{P}^* > 0$ }
                {
                    update $\Pi$ by merging $\mathcal{C}$ and $\mathcal{C}^{'}$.\;
                }
            }
            \If{$\Pi$ has not been updated}{
                \Break \;
            }
        }
        \Return $\Pi$
    }
\end{algorithm}

The algorithm steps are described in Algorithm \ref{alg:louvain:persistence}. Starting with each vertex as the unique member of a community (Line~\ref{alg:start}), the algorithm repeatedly calls the function \Move which modifies the initial partition to improve the null-adjusted persistence.
The algorithm stops when a call of the function \Move does not change the input partition, i.e. no more improvement is found.

The function \Move is the core of the algorithm. At the beginning, communities are considered as individual nodes that are the only members of an initial partition $\Pi$ (Function \Move Line~\ref{move:start}).
The function repeatedly modifies the partition $\Pi$ by merging two of its communities $\mathcal{C}^{\prime}$ and  $\mathcal{C}$ if this merge produces a gain in the objective function.

Specifically, at each step, the selected community pair $(\mathcal{C}^{\prime}, \mathcal{C})$ is the one that results in the greatest positive increase in the null-adjusted persistence $\Delta \mathcal{P}^*$ (Function \Move Line~\ref{move:best:delta}).
Then, $\Pi$ is modified accordingly, and ${\mathcal P}^{\star}_{\Pi}$ increases. Otherwise, the function \Move stops, returning the current partition $\Pi$.
Since joining a pair of communities without common edges can never increase null-adjusted persistence, the algorithm only considers pairs of communities connected by at least one edge; therefore, the number of pairs of communities is approximately the number of edges $|E|$ in $G$.



\textbf{Data availability}

Data will be made available on request.

\textbf{Acknowledgments}
AA, PB, SB and RG acknowledge financial support from the European Union – NextGenerationEU. Project PRIN 2022 “Networks:decomposition, clustering and community detection'' code: 2022NAZ0365 - CUP H53D23002510006. PB and RG are members of the GNAMPA-INdAM group. CC has received funding from the European Union’s Horizon 2020 research and innovation program under the Marie Sklodowska-Curie Grant Agreement No 101034403.

\bibliographystyle{model5-names}\biboptions{authoryear}

\bibliography{references}

\end{document}